\documentclass[
  aps,
  physrev,
  reprint,
  amsmath,
  amssymb,
  superscriptaddress,
  longbibliography,
  nofootinbib,
  floatfix
]{revtex4-2}

\usepackage[utf8]{inputenc}
\usepackage[T1]{fontenc}

\usepackage{graphicx}
\usepackage{bm}
\usepackage{mathtools}
\usepackage{amsthm}
\usepackage{booktabs,multirow}

\usepackage{tikz}
\usetikzlibrary{arrows.meta,positioning,calc,fit,decorations.pathreplacing}

\usepackage{microtype}
\allowdisplaybreaks

\usepackage[colorlinks=true,linkcolor=blue,citecolor=blue,urlcolor=blue]{hyperref}
\hypersetup{
  pdftitle={Stable Qubit Readout and the Identifiability of Population Change},
  pdfauthor={Dongdong Zhang}
}

\theoremstyle{definition}
\newtheorem{definition}{Definition}

\theoremstyle{plain}
\newtheorem{theorem}{Theorem}
\newtheorem{proposition}[theorem]{Proposition}
\newtheorem{corollary}[theorem]{Corollary}

\newcommand{\Tr}{\operatorname{Tr}}
\newcommand{\rank}{\operatorname{rank}}

\newcommand{\range}{\operatorname{range}}

\newcommand{\id}{\mathbb I}
\newcommand{\R}{\mathbb R}

\newcommand{\Herm}{\mathsf{Herm}}

\newcommand{\norminf}[1]{\left\lVert #1\right\rVert_\infty}
\newcommand{\normtwo}[1]{\left\lVert #1\right\rVert_2}
\newcommand{\normone}[1]{\left\lVert #1\right\rVert_1}

\newcommand{\ket}[1]{\lvert #1\rangle}

\newcommand{\ketbra}[2]{\lvert #1\rangle\!\langle #2\rvert}

\newcommand{\calI}{\mathcal I}
\newcommand{\calD}{\mathcal D}
\newcommand{\Nz}{d_z}
\newcommand{\qzero}{\bm q_0}
\newcommand{\ez}{\bm e_z}

\begin{document}

\title{Stable Qubit Readout and the Identifiability of Population Change}
\author{Dongdong Zhang}
\email{don@hfit.edu.cn}
\affiliation{School of Aerospace Information, Hefei Institute of Technology, Hefei 238076, China}

\begin{abstract}
Stable readout statistics are often taken as evidence for a well-defined physical response, but stability alone need not identify which state quantity has changed. We analyze this issue for finite collections of qubit states measured by binary readouts, focusing on changes in computational-basis population. The central question is when reproducible response data certify the sign or range of an underlying population change. We show that the answer is controlled by the calibrated measurement directions, not by loop consistency alone. For a fully calibrated finite readout family, we derive an exact closed-form interval of all compatible population changes. We also construct a same-record, jointly measurable example in which identical probabilities and accepted loop checks admit positive, zero, and negative population interpretations. When only a diagonal readout gain and a bound on coherence sensitivity are trusted, we obtain the sharp minimax interval and the necessary-and-sufficient sign condition $g>2\chi$. These results separate implementation stability from population identifiability and provide analytic benchmarks for qubit readout calibration.
\end{abstract}

\maketitle

\section{Introduction}
\label{sec:intro}

Quantum experiments infer physical properties from outcome statistics produced by imperfect sources, controls, and detectors.  This readout-mediated structure appears in quantum-sensor searches for weak fundamental interactions, where a controlled phase can be encoded into a population readout~\cite{Rong2018NVSensor}.  The mathematical language of effects and positive-operator-valued measures (POVMs) makes the dependence on the measurement explicit~\cite{Busch1996,Holevo2011,NielsenChuang2010}.  Quantum state and detector tomography then ask how much of a state or measurement can be reconstructed from calibrated probe--response data or from an available measurement record~\cite{ParisRehacek2004,Hradil1997,James2001,Fiurasek2001,DAriano2003,Teo2013}.  Informational completeness is sufficient for full reconstruction, and symmetric or tight informationally complete designs provide standard routes to this goal~\cite{Renes2004,Scott2006}.  Many scientific questions, however, concern one target quantity or functional rather than the whole state or detector.  Partial and indirect estimation therefore ask which observables or functions are fixed by the actually available, possibly incomplete, statistics~\cite{DAriano2008,Carmeli2017,Keith2018}, while confidence-region methods attach finite-sample guarantees to tomography-based inferences~\cite{Christandl2012,Zambrano2024}.

A complementary literature addresses a different failure mode: the nominal source and measurement labels may not denote fixed physical operations.  Self-consistent tomography, loop tests, gate-set tomography, context-dependence diagnostics, and quantum characterization, verification, and validation (QCVV) expose state-preparation-and-measurement (SPAM) correlations, drift, gauge freedom, and unmodeled error~\cite{Merkel2013,VanEnk2013,Stark2014,Jackson2015,Greenbaum2015,McCormick2017,Rudinger2019,DiMatteo2020,BlumeKohout2020,Nielsen2021,Rudinger2022,Cattaneo2023,Jayakumar2024,BlumeKohout2025,Hashim2025}.  Other approaches use additional assumptions or repeated-measurement resources to separate state-preparation and measurement errors more directly~\cite{Lin2021,Khan2026}.  These tools test whether a specified operational model is self-consistent or context-independent, diagnose SPAM-related failure modes, and identify extra resources that can remove particular measurement ambiguities.  Such checks leave a further target-identifiability question: whether the calibrated readout directions and the observed finite response record determine a specified state quantity, such as a computational-basis population change.

Readout is where target identifiability and implementation stability meet.  Computational-basis calibration fixes the diagonal entries of a binary qubit effect, but a general effect can also contain off-diagonal components.  Detector tomography and direct measurement-characterization studies have reconstructed, selectively accessed, or operationally quantified such coherent readout response~\cite{Lundeen2009,Feito2009,Zhang2012,Chen2019,Cimini2019,Zhang2020,Xu2020,XuPRL2021,XuAP2021}.  Related measurement-device-independent random-number-generation experiments use real-time measurement tomography to prevent device-characterization errors from being promoted into certified physical claims~\cite{Nie2016MDIQRNG}.  Theory and numerical work has also analyzed coherent measurement noise and procedures for detecting or suppressing it~\cite{Tang2024,Malik:2025kpb}.  The same off-diagonal degrees of freedom can be quantified as measurement coherence; in particular, for a binary qubit POVM the $\ell_\infty$ measurement-coherence monotone reduces to the total off-diagonal magnitude of the two POVM effects~\cite{Baek2020}.  Classical assignment-matrix mitigation assumes a population-only response.  Detector tomography can test this assumption, detector-tomography-assisted classical mitigation is exact under invertible classical noise~\cite{Maciejewski2020}, and scalable assignment-matrix methods address correlated classical measurement errors within that classical-response model~\cite{Nation2021}.  Methods that explicitly retain coherent response go beyond the assignment-matrix setting~\cite{Aasen2024,Malik:2025kpb}.  Because full detector characterization can be resource intensive and finite-precision measurement tomography is itself an active subject~\cite{Barbera2025,ZambranoQMT2025,Das2026}, it is useful to isolate what different levels of readout calibration certify about a chosen state quantity.

The general span criterion for target identifiability is already well established, and so is the observation that off-diagonal POVM elements couple to state coherence.  Here we specialize this general issue to a finite-dimensional readout benchmark: given a stable finite-family response, determine the exact set of computational-basis population changes compatible with the same calibrated readout record, and identify what additional readout information certifies the sign of that change.  The question is timely because bounded or even very small measurement deviations can materially alter quantum certification~\cite{Rosset2012,Morelli2022,Tavakoli2024,Svegborn2025,Zambrano2026,Moreno2026}.  Closely related repeated-measurement work shows that physicality constraints can turn an otherwise gauge-like measurement ambiguity into a finite compatible interval~\cite{Chu2026}.  We focus on binary qubit readouts, where the relevant target-identifiability problem can be solved in closed form using Bloch-ball geometry.  This gives an analytic separation between implementation stability, understood as reproducibility of a finite operational response, and population identifiability, understood as determination of a computational-basis population change by calibrated measurement directions.

Within this scope, the paper makes the following contributions.

First, for an arbitrary fully calibrated finite family of binary qubit effects, we reduce the signed readout differences to a real linear system $\bm g=A\bm q$ on the Bloch unit ball for state differences and derive the exact compatible interval for the computational-basis population change.  Its width
\begin{equation}
W_z=2\left\lVert P_{\ker A}\ez\right\rVert_2
\sqrt{1-\normtwo{A^+\bm g}^{2}}
\label{eq:intro-width}
\end{equation}
factorizes into the overlap between the population direction and the measurement null space, and the data-dependent physical slack left by the observed response.  Thus the result turns the established operator-span and set-identifiability criteria into an attainable closed-form interval for this qubit target.

Second, we give an explicit same-record counterexample.  The two binary effects are coarse grainings of a single parent POVM, so both reported readout bits can be computed from the same underlying outcome record.  Three valid triples of qubit states reproduce exactly the same six probabilities and the same computational-basis calibration, and they pass the same specified loop test, yet they imply positive, zero, and negative population shifts.  The construction identifies a common-mode coherent response that is invisible to this implementation comparison.

Third, we derive the exact compatible population interval when the readout effect is not fully reconstructed but the diagonal gain $\kappa$ and an independent bound $|c|\le\chi$ on the off-diagonal response are trusted.  For an observed scalar response $g$, the lower endpoint is
\begin{equation}
z_-=
\frac{\kappa g-2\chi\sqrt{\kappa^2+4\chi^2-g^2}}
{\kappa^2+4\chi^2},
\label{eq:intro-zminus}
\end{equation}
for compatible data, and the population sign is identified if and only if $g>2\chi$.  The threshold has a direct resource interpretation: the observed response must exceed the largest contribution that the allowed measurement coherence can generate without any population change.  The formula provides an analytic qubit benchmark for numerical semidefinite-programming and confidence-region analyses when only partial readout calibration is trusted.

Finally, we use these results to organize three calibration regimes: full calibration of the binary effects, diagonal calibration supplemented by a scalar bound on coherent response, and interval-valued calibration reduced to second-order-cone constraints.  In all three regimes, finite-family loop data and readout-calibration data play distinct roles: the former audit implementation stability, while the latter determine which measurement directions are available for population identifiability.

This work is organized as follows.  Section~\ref{sec:architecture} defines the operational and physical layers used throughout the analysis.  Section~\ref{sec:exact-interval} derives the exact population interval for a fully calibrated finite readout family.  Section~\ref{sec:benchmark} presents the same-record ambiguity construction.  Section~\ref{sec:bounded} gives the bounded-coherence population certificate.  Section~\ref{sec:closure} extends the analysis to calibration intervals and finite-sample guarantees.  Section~\ref{sec:discussion} places the results in relation to tomography, QCVV, measurement coherence, and imperfect-measurement certification.

\section{Two inference layers in a finite-family experiment}
\label{sec:architecture}

\subsection{Registered source settings and implementation loop}

Let $i\in\{0,1\}$ label two registered active source settings, and let $-$ label one registered reference setting common to both active settings.  Let $j\in\{0,1\}$ label two reported binary readouts.  The two readouts may be deterministic or randomized functions of the same underlying outcome record; no within-shot independence assumption is imposed.  We write
\begin{equation}
 p_{ij}^{+}:=P(Y_j=1\mid a_i^+),
 \qquad
 p_j^-:=P(Y_j=1\mid a^-),
 \label{eq:prob-def}
\end{equation}
and define the active--reference response
\begin{equation}
 g_{ij}:=p_{ij}^{+}-p_j^-.
 \label{eq:gij}
\end{equation}
The active and reference settings, the reported readout maps, the convention for the outcome labeled $1$, and the relation between the reported readouts and the underlying record are fixed before these response data are given a physical interpretation.

For the implementation-loop audit, introduce the endpoint pairs
\begin{equation}
 x_{ij}:=(p_{ij}^{+},p_j^-),
\end{equation}
so that each pair contains the active and reference endpoint probabilities for the same reported readout $j$.  Let
\begin{equation}
 \bm p=(p_{00}^{+},p_{10}^{+},p_{01}^{+},p_{11}^{+},p_0^-,p_1^-)^{\mathsf T}
\end{equation}
denote the vector of all six endpoint probabilities.  When finite-sample uncertainty is considered, $x_{ij}=x_{ij}(\bm p)$ and $g_{ij}=g_{ij}(\bm p)$ are understood as functions of this vector.

The vertical direction in the implementation square corresponds to replacing readout $0$ by readout $1$.  This readout replacement is allowed to carry a common scalar offset: if two readout effects differ only by an identity component, $F_1=F_0+\Delta\id$, then every state probability is shifted by the same amount $\Delta$, while active--reference differences are unchanged.  We therefore define
\begin{equation}
 \mathcal A_{\Delta}(u,v)=(u+\Delta,v+\Delta),
 \qquad \Delta\in\calD,
 \label{eq:translation}
\end{equation}
where $\calD$ is the independently calibrated set of admissible relative offsets between the two readouts.  Only the relative offset matters; equivalently, separate readout offsets $\delta_0$ and $\delta_1$ enter the loop only through $\Delta=\delta_1-\delta_0$.  The value of $\Delta$ is not fitted separately for each edge or each active setting.

For a two-component vector $y=(y_1,y_2)$, write
\begin{equation}
 \norminf{y}:=\max\{|y_1|,|y_2|\}.
\end{equation}
The horizontal residual compares the two active settings at fixed readout $j$.  Since
\[
 x_{1j}-x_{0j}
 =
 (p_{1j}^{+}-p_{0j}^{+},\,0),
\]
we have
\begin{align}
 r_j^{\rm h}(\bm p)
 &=\norminf{x_{1j}(\bm p)-x_{0j}(\bm p)} \nonumber\\
 &=|p_{1j}^{+}-p_{0j}^{+}|.
 \label{eq:rh}
\end{align}
Thus $r_j^{\rm h}$ is the mismatch along the source-replacement edge for readout $j$.

The vertical residual compares the two readouts at fixed active setting $i$, after the common scalar translation $\Delta$ is applied to both coordinates of the readout-$0$ endpoint pair.  Since
\[
 x_{i1}-\mathcal A_\Delta x_{i0}
 =
 (p_{i1}^{+}-p_{i0}^{+}-\Delta,\,
  p_1^{-}-p_0^{-}-\Delta),
\]
we have
\begin{align}
 r_i^{\rm v}(\Delta;\bm p)
 &=\norminf{x_{i1}(\bm p)-\mathcal A_{\Delta}x_{i0}(\bm p)} \nonumber\\
 &=
 \max\left\{
 |p_{i1}^{+}-p_{i0}^{+}-\Delta|,
 |p_1^{-}-p_0^{-}-\Delta|
 \right\}.
 \label{eq:rv}
\end{align}
Thus $r_i^{\rm v}$ asks whether replacing readout $0$ by readout $1$ looks like the same scalar offset $\Delta$ on both the active endpoint and the reference endpoint.

The loop audit asks whether the two horizontal residuals $(j=0,1)$ and the two vertical residuals $(i=0,1)$ are all small for one common value of $\Delta$ allowed by the independently calibrated set $\calD$.  A simultaneous confidence region $\calI\subset[0,1]^6$ for $\bm p$ then gives finite-sample lower bounds on the responses $g_{ij}$ and finite-sample upper bounds on the edge residuals.  The vertical part is evaluated with the prescribed common-$\Delta$ optimization over $\Delta\in\calD$.

\begin{definition}[Finite-family implementation certificate]
\label{def:implementation}
Fix a response margin $\gamma>0$ and an edge tolerance $\tau>0$ before inspecting the response data.  The margin $\gamma$ is the required lower bound on every active--reference response, and $\tau$ is the allowed mismatch on each edge of the implementation loop.  Let $\calI$ be the simultaneous confidence region for the endpoint-probability vector $\bm p$, and let $\calD$ be the independently calibrated set of admissible scalar translations.

Define the worst-case certified response margin by
\begin{equation}
 M_{\rm resp}:=
 \min_{i,j}\inf_{\bm p\in\calI}g_{ij}(\bm p).
\end{equation}
Define the horizontal loop error by
\begin{equation}
 H_{\rm loop}:=
 \max_j\sup_{\bm p\in\calI}r_j^{\rm h}(\bm p),
\end{equation}
and define the vertical loop error by
\begin{equation}
 V_{\rm loop}:=
 \inf_{\Delta\in\calD}
 \sup_{\bm p\in\calI}
 \max_i r_i^{\rm v}(\Delta;\bm p).
\end{equation}
Here the same $\Delta$ must be used for both vertical edges; it is chosen from the independently calibrated set $\calD$ and is not fitted separately for the two active settings.

The registered family passes the implementation certificate when
\begin{equation}
 M_{\rm resp}\ge\gamma
 \label{eq:margin-cert}
\end{equation}
and
\begin{equation}
 E_{\rm loop}:=\max\{H_{\rm loop},V_{\rm loop}\}\le\tau.
 \label{eq:loop-cert}
\end{equation}
\end{definition}

The two conditions have different roles.  Equation~\eqref{eq:margin-cert} says that every response remains positive by at least the predeclared margin throughout the simultaneous confidence region.  Equation~\eqref{eq:loop-cert} says that the registered implementation square closes within tolerance: the two horizontal edges are small, and the two vertical edges are small for one common translation $\Delta$ allowed by the independent calibration.

The loop condition is a simultaneous four-edge claim.  The closing edge shares endpoint data with the other edges and is therefore statistically correlated with them, but this statistical dependence does not make it logically redundant at the same tolerance.  Appendix~\ref{app:loop} proves the sharp deterministic statement: three consecutive residuals bounded by $\epsilon$ imply, in general, only a $3\epsilon$ bound on the fourth.

\begin{figure*}[t]
\centering
\begin{minipage}[t]{0.47\textwidth}
\centering
\begin{tikzpicture}[x=1.52cm,y=1.20cm,>=Latex,
  node/.style={draw,circle,inner sep=1.7pt,font=\small},
  lab/.style={font=\scriptsize,fill=white,inner sep=1pt}]
\node[node] (x00) at (0,1) {$x_{00}$};
\node[node] (x10) at (2,1) {$x_{10}$};
\node[node] (x01) at (0,0) {$x_{01}$};
\node[node] (x11) at (2,0) {$x_{11}$};
\draw[<->,thick] (x00)--node[lab,above] {source replacement}(x10);
\draw[<->,thick] (x01)--node[lab,below] {source replacement}(x11);
\draw[<->,thick] (x00)--node[lab,left] {readout replacement}(x01);
\draw[<->,thick] (x10)--node[lab,right] {readout replacement}(x11);
\node[font=\scriptsize,align=center] at (1,-0.63)
{all four edges are certified jointly};
\end{tikzpicture}
\par\smallskip
{\small (a) Registered implementation square.}
\end{minipage}\hfill
\begin{minipage}[t]{0.50\textwidth}
\centering
\begin{tikzpicture}[>=Latex,node distance=0.31cm,
 box/.style={draw,rounded corners,align=center,font=\scriptsize,text width=5.15cm,minimum height=0.69cm},
 gate/.style={draw,rounded corners,align=center,font=\scriptsize,text width=4.65cm,minimum height=0.67cm}]
\node[box] (data) {registered endpoint probabilities\\and active--reference responses};
\node[box,below=of data] (loop) {implementation audit\\margin and full-loop consistency};
\node[gate,below=0.47cm of loop] (model) {measurement-information layer\\full readout calibration, a coherence bound, or a robust set};
\node[box,below=of model] (physics) {physical target certificate\\exact interval or one-sided bound on $\Delta z$};
\draw[->,thick] (data)--(loop);
\draw[->,thick] (loop)--(model);
\draw[->,thick] (model)--(physics);
\draw[->,dashed,bend left=42] (loop.east) to node[right,font=\scriptsize,align=left] {not implied by\\loop closure} (physics.east);
\end{tikzpicture}
\par\smallskip
{\small (b) Logical architecture.}
\end{minipage}
\caption{Implementation stability and target identification require different evidence.  The loop audits the declared substitutions.  Additional measurement information determines whether a stable response yields a certificate for the population change.}
\label{fig:architecture}
\end{figure*}

\subsection{Qubit response coordinates}
\label{subsec:qubit-coordinates}

Once a qubit description is adopted, write the effect associated with the reported outcome $Y_j=1$ in the computational basis as
\begin{equation}
 F_j=
 \begin{pmatrix}
 \beta_j & c_j\\
 c_j^* & \beta_j+\kappa_j
 \end{pmatrix},
 \qquad 0\le F_j\le\id.
 \label{eq:effect-param}
\end{equation}
The registered outcome label is taken, as part of the calibration convention, so that the calibrated diagonal gain $\kappa_j$ is nonnegative.  For an active--reference state difference
\begin{equation}
 X=\rho-\sigma=
 \begin{pmatrix}
 -z & w^*\\
 w & z
 \end{pmatrix},
 \qquad w=u+iv,
 \label{eq:X-param}
\end{equation}
define
\begin{equation}
 \bm q=(z,u,v)^{\mathsf T}.
\end{equation}
Then the response of readout $j$ to the state difference is
\begin{equation}
 \Tr(F_jX)
 =
 \kappa_j z+2\operatorname{Re}(c_jw).
\end{equation}
Thus the population coordinate $z$ and the coherence coordinates $(u,v)$ enter the same scalar response unless the effect is diagonal in the computational basis.

The eigenvalues of $X$ are $\pm\normtwo{\bm q}$, and $X$ is a difference of two qubit states if and only if
\begin{equation}
 \normtwo{\bm q}\le1.
 \label{eq:difference-ball}
\end{equation}
Sufficiency follows from $\rho=(\id+X)/2$ and $\sigma=(\id-X)/2$; Appendix~\ref{app:geometry} records the details.

For active source setting $i$, let $X_i=\rho_i-\sigma$ and write
\begin{equation}
 g_{ij}=\Tr(F_jX_i)
 =\kappa_j z_i+2\operatorname{Re}(c_jw_i).
 \label{eq:response}
\end{equation}
Writing $c_j=a_j+ib_j$, define the real row vector
\begin{equation}
 \bm a_j=(\kappa_j,2a_j,-2b_j).
 \label{eq:row}
\end{equation}
More generally, for a calibrated family of $m$ binary effects, stack the rows into $A\in\R^{m\times3}$.  For each active setting $i$, collect the responses into
$\bm g_i=(g_{i1},\ldots,g_{im})^{\mathsf T}$ and write
\begin{equation}
 \bm g_i=A\bm q_i,
 \qquad \normtwo{\bm q_i}\le1.
 \label{eq:linear-ball}
\end{equation}
The target population change for active setting $i$ is the first coordinate,
\begin{equation}
 z_i=\ez^{\mathsf T}\bm q_i,
 \qquad \ez=(1,0,0)^{\mathsf T}.
\end{equation}
When analyzing a single active--reference difference, we suppress the index $i$ and write $\bm g=A\bm q$ and $z=\ez^{\mathsf T}\bm q$.

\subsection{Operator-span identifiability background}
\label{subsec:span}

For a general finite-dimensional system, let
\begin{equation}
\widetilde F_j=F_j-\frac{\Tr F_j}{d}\id
\end{equation}
denote the traceless part of effect $F_j$.  A trace-zero target observable $T$ is determined, for all state differences, by the effects $F_1,\ldots,F_m$ exactly when $T$ belongs to the real span of $\widetilde F_1,\ldots,\widetilde F_m$~\cite{DAriano2008,Carmeli2017,Keith2018}.  Equivalently, every traceless Hermitian operator in the measurement kernel is Hilbert--Schmidt orthogonal to $T$.  We use this established criterion as the identifiability background.  The results below specialize the compatible set to the qubit difference ball and compute the resulting population interval in closed form.

In the qubit coordinates above, exact population identifiability for all compatible state differences is therefore equivalent to
\begin{equation}
\ez\in\operatorname{row}(A),
\label{eq:row-identifiability}
\end{equation}
where $\operatorname{row}(A)$ denotes the real row space of $A$, i.e., the span of the row vectors $\bm a_j$.  Equivalently, if $\ker A:=\{\bm h\in\R^3:A\bm h=0\}$ denotes the null space of $A$, then Eq.~\eqref{eq:row-identifiability} is equivalent to $P_{\ker A}\ez=0$, where $P_{\ker A}$ denotes the Euclidean orthogonal projector onto $\ker A$.  The next section computes the full compatible population interval in closed form.  When Eq.~\eqref{eq:row-identifiability} fails, the interval has nonzero width, and its two factors separate the measurement-direction blind spot from the data-dependent physical slack.

A particularly important case is a scalar-shift readout family,
\begin{equation}
F_j=F_\star+\delta_j\id,
\label{eq:scalar-shift}
\end{equation}
with each $F_j$ still satisfying $0\le F_j\le\id$.  All effects then have the same traceless part and hence the same response row $\bm a_\star$.  Since state differences are traceless, the scalar offsets do not change $\Tr(F_jX)$ and cannot increase $\rank(A)$.  They may be valuable for implementation checks, but they do not add measurement-direction information.  In particular, any off-diagonal component common to $F_\star$ can affect the response, but it is not diagnosed by comparisons among the scalar-shifted implementations.

\section{Exact population ambiguity for calibrated readout families}
\label{sec:exact-interval}

For the rest of this section, fix one active--reference difference and suppress the active-setting index.  The physical inverse problem in Eq.~\eqref{eq:linear-ball} is the intersection of an affine space with the Bloch unit ball for state differences.  For a qubit, this geometry gives an exact interval for the target population change.

\begin{theorem}[Exact calibrated-family population interval]
\label{thm:exact-interval}
Let $A\in\R^{m\times3}$ be a fully calibrated response matrix and let $\bm g\in\range(A)$ be an exactly consistent response vector.  Let $A^+$ denote the Moore--Penrose pseudoinverse, and define the minimum-norm solution
\begin{equation}
\qzero=A^+\bm g
\label{eq:q0}
\end{equation}
and the population blind-spot factor
\begin{equation}
\Nz:=\normtwo{P_{\ker A}\ez}.
\label{eq:dz-factor}
\end{equation}
A compatible physical state difference exists if and only if $\normtwo{\qzero}\le1$.  When it exists, the exact set of compatible population changes is
\begin{equation}
\boxed{
z\in\left[
\ez^{\mathsf T}\qzero-\Nz\sqrt{1-\normtwo{\qzero}^{2}},
\ \ez^{\mathsf T}\qzero+\Nz\sqrt{1-\normtwo{\qzero}^{2}}
\right].}
\label{eq:exact-interval}
\end{equation}
Both endpoints are attainable by physical pairs of qubit states.
\end{theorem}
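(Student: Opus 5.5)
The plan is to parametrize the affine solution set of $\bm g=A\bm q$ and split it orthogonally. Since $\bm g\in\range(A)$, every solution has the form $\bm q=\qzero+\bm h$ with $\bm h\in\ker A$. The minimum-norm solution $\qzero=A^+\bm g$ lies in $\operatorname{row}(A)=(\ker A)^\perp$. By Pythagoras, $\normtwo{\bm q}^2=\normtwo{\qzero}^2+\normtwo{\bm h}^2$. Combined with the physicality criterion Eq.~\eqref{eq:difference-ball}, the compatible set is
\[
\{\qzero+\bm h:\ \bm h\in\ker A,\ \normtwo{\bm h}\le r\},\qquad r:=\sqrt{1-\normtwo{\qzero}^2}.
\]
This set is nonempty if and only if $\normtwo{\qzero}\le1$ (take $\bm h=0$), which gives the existence claim.

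Next I compute the range of the linear functional $z=\ez^{\mathsf T}\bm q$ on this set. Write $\ez^{\mathsf T}\bm h=(P_{\ker A}\ez)^{\mathsf T}\bm h$, which holds because $\bm h\in\ker A$ and $P_{\ker A}$ is an orthogonal projector. Then Cauchy--Schwarz gives $|\ez^{\mathsf T}\bm h|\le \Nz\normtwo{\bm h}\le \Nz r$, and hence the containment in Eq.~\eqref{eq:exact-interval}.

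For attainability, suppose $\Nz>0$. Choose $\bm h_\pm=\pm r\,P_{\ker A}\ez/\Nz$. These lie in $\ker A$, have norm $r$, and give $z=\ez^{\mathsf T}\qzero\pm\Nz r$. If $\Nz=0$, the interval degenerates to the point $\ez^{\mathsf T}\qzero$, which is attained at $\bm h=0$. The set of compatible $\bm q$ is a convex ball and the functional is continuous, so every intermediate value is also attained. The claimed set is therefore the exact image, not just a bounding interval.

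Finally, I convert the endpoint vectors $\bm q_\pm$ (each of norm at most $1$) into physical pairs of qubit states. Build $X_\pm$ via Eq.~\eqref{eq:X-param} and set $\rho=(\id+X_\pm)/2$ and $\sigma=(\id-X_\pm)/2$, as in Appendix~\ref{app:geometry}. These are positive with unit trace because the eigenvalues of $X$ are $\pm\normtwo{\bm q}$. By construction they reproduce $\Tr(F_jX)=\bm a_j\bm q=g_j$.

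The only delicate point is the orthogonality $\qzero\perp\ker A$, which is what makes the radius $r$ and the factorization of the width $\Nz\cdot r$ exact. I will justify it by the standard property $\range(A^+)=\range(A^{\mathsf T})=(\ker A)^\perp$ of the Moore--Penrose pseudoinverse.
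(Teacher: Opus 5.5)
Your proposal is correct and follows essentially the same route as the paper: the orthogonal splitting $\bm q=\qzero+\bm n$ with $\bm n\in\ker A$, Pythagoras for the residual radius $\sqrt{1-\normtwo{\qzero}^2}$, Cauchy--Schwarz against $P_{\ker A}\ez$ with parallel/antiparallel choices for the endpoints, and the construction $(\id\pm X)/2$ for physical realization. Two steps the paper states only tersely are spelled out in your version: the pseudoinverse-range justification of $\qzero\perp\ker A$, and the convexity argument showing that every intermediate value is attained.
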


\begin{proof}
Every solution of $A\bm q=\bm g$ has the orthogonal decomposition
\begin{equation}
\bm q=\qzero+\bm n,
\qquad \bm n\in\ker A,
\qquad \qzero\perp\bm n,
\end{equation}
because $\qzero=A^+\bm g$ is the minimum-norm solution.  Hence
\begin{equation}
\normtwo{\bm q}^{2}
=
\normtwo{\qzero}^{2}+\normtwo{\bm n}^{2}.
\end{equation}
A compatible physical state difference exists if and only if $\normtwo{\qzero}\le1$.  In that case the physical condition is
\begin{equation}
\normtwo{\bm n}\le r,
\qquad r:=\sqrt{1-\normtwo{\qzero}^{2}}.
\end{equation}
The target coordinate is
\begin{equation}
\begin{aligned}
 \ez^{\mathsf T}\bm q
 &=
 \ez^{\mathsf T}\qzero+\ez^{\mathsf T}\bm n \\
 &=
 \ez^{\mathsf T}\qzero+(P_{\ker A}\ez)^{\mathsf T}\bm n .
\end{aligned}
\end{equation}
Thus only the component of $\ez$ in $\ker A$ can vary on the compatible ball.  Cauchy--Schwarz gives
\begin{equation}
\min_{\substack{\bm n\in\ker A\\\normtwo{\bm n}\le r}}
\ez^{\mathsf T}\bm n
=
-r\normtwo{P_{\ker A}\ez},
\end{equation}
and the maximum is $r\normtwo{P_{\ker A}\ez}$.  If $r\normtwo{P_{\ker A}\ez}=0$, the interval is degenerate and the endpoint is reached by $\bm n=0$.  Otherwise the two endpoints are reached by choosing $\bm n$ parallel or antiparallel to $P_{\ker A}\ez$ with norm $r$.  Equation~\eqref{eq:difference-ball} then converts the endpoint vectors into physical pairs of qubit states.
\end{proof}

When a compatible physical state difference exists, the interval width is
\begin{equation}
W_z=2\Nz\sqrt{1-\normtwo{\qzero}^{2}}.
\label{eq:width}
\end{equation}
The first factor, $\Nz\in[0,1]$, depends only on the calibrated measurement directions.  It vanishes exactly when the population axis belongs to the response row space, so that the population change is identified for all compatible state differences.  The second factor is data dependent: it is the remaining radius of the compatible slice of the Bloch difference ball. Thus responses near the boundary of the reachable response set
$\{A\bm q:\bm q\in\R^3,\ \normtwo{\bm q}\le1\}$
leave little state-space freedom even when the measurement family is not population-identifying in the operator-span sense.

\begin{corollary}[Universal and data-specific identification]
\label{cor:identification}
The population change is identifiable for every physically compatible response vector if and only if $\Nz=0$, equivalently $\ez\in\operatorname{row}(A)$.  For a fixed physically compatible response vector, the compatible interval collapses if and only if
\begin{equation}
\Nz=0
\qquad\text{or}\qquad
\normtwo{\qzero}=1.
\end{equation}
Thus a measurement family may fail universal population identification, $\Nz>0$, while a particular response still identifies the population change because it lies on the boundary of the physical response image.
\end{corollary}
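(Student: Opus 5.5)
The plan is to read Corollary~\ref{cor:identification} directly off the interval in Theorem~\ref{thm:exact-interval}, whose width is $W_z=2\Nz\sqrt{1-\normtwo{\qzero}^{2}}$ by Eq.~\eqref{eq:width}. Since $\Nz\ge0$ and the square root is nonnegative, the interval is a single point exactly when the product vanishes. For a fixed compatible response this is the same as $\Nz=0$ or $\normtwo{\qzero}=1$, because physical compatibility already forces $\normtwo{\qzero}\le1$. The attainability clause of the theorem matters here: it guarantees the set of compatible $z$ is the whole closed interval, not a subset of it. So ``collapses'' genuinely means the width is zero, and this settles the data-specific statement.

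For the universal statement, sufficiency is immediate. If $\Nz=0$, every compatible response vector gives width zero by the data-specific statement. For necessity, I will show that if $\Nz>0$ then some compatible response has positive width. The simplest choice is $\bm g=0$, which lies in $\range(A)$ and is physically compatible (take $\rho=\sigma$). Then $\qzero=A^+0=0$, the slack factor equals $1$, and the width is $2\Nz>0$.

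Concretely, this witnesses two physical state differences $\pm \bm n/\normtwo{\bm n}$, with $\bm n=P_{\ker A}\ez$, that produce the same zero response. Their population changes are $\pm\Nz$, which differ. The equivalence $\Nz=0\iff \ez\in\operatorname{row}(A)$ is the standard fact $\operatorname{row}(A)=(\ker A)^\perp$, already recorded at Eq.~\eqref{eq:row-identifiability}.

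The closing sentence then needs an example with $\Nz>0$ and a compatible $\bm g$ for which $\normtwo{\qzero}=1$. This is available whenever $A\neq0$: pick a unit vector $\bm q$ in $\operatorname{row}(A)$ and set $\bm g=A\bm q$. Then $\qzero=\bm q$, because $A^+A$ is the orthogonal projector onto $\operatorname{row}(A)$, and so $\normtwo{\qzero}=1$.

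There is no real obstacle here. The only point needing care is to rely on attainability in Theorem~\ref{thm:exact-interval}, so that width zero is equivalent to identification. A second small point is the degenerate case $A=0$. There $\Nz=1$ and no boundary response with $\normtwo{\qzero}=1$ exists, so the final remark should be read as applying to families with $A\ne0$.
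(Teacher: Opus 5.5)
Your proposal is correct and takes the same route as the paper, which reads the corollary directly off the width factorization $W_z=2\Nz\sqrt{1-\normtwo{\qzero}^{2}}$ of Theorem~\ref{thm:exact-interval} together with $\operatorname{row}(A)=(\ker A)^{\perp}$. You also supply details the paper leaves implicit: the explicit witness $\bm g=0$ (width $2\Nz$) for necessity, and the construction $\bm g=A\bm q$ with a unit vector $\bm q\in\operatorname{row}(A)$ for the boundary case. Your caveat that no boundary response exists when $A=0$ is accurate.
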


The second case is boundary-induced uniqueness, not row-space identifiability, let alone informational completeness.  It can be fragile under finite statistics because any enlargement of the response region generally restores nonzero state-space slack.

For one calibrated effect with nonzero response row, $A$ has one row
$\bm a=(\kappa,2\operatorname{Re}c,-2\operatorname{Im}c)$.  The directional factor becomes
\begin{equation}
\Nz=\frac{2|c|}{\sqrt{\kappa^2+4|c|^2}}.
\label{eq:single-dz}
\end{equation}
Thus a diagonal effect with nonzero diagonal gain has no directional blind spot for population, whereas coherent response rotates the measured direction away from the population axis.  A family related only by scalar offsets has exactly the same response row, and hence the same $\Nz$, as one member.  By contrast, adding calibrated readout directions cannot increase $\Nz$ and can reduce it when the new directions constrain the previous blind component; two suitable independent directions can make $\Nz$ vanish.

\subsection{Single-state population interval}
\label{subsec:absolute-state}

For a single qubit state
\begin{equation}
 \rho(z,w)=
 \begin{pmatrix}
 1-z&w^*\\w&z
 \end{pmatrix},
\end{equation}
define the centered vector
\begin{equation}
\bar{\bm q}=(z-1/2,\operatorname{Re}w,\operatorname{Im}w)^{\mathsf T}.
\end{equation}
Positivity of $\rho$ is equivalent to $\normtwo{\bar{\bm q}}\le1/2$.  If $p_j=\Tr(F_j\rho)$, then
\begin{equation}
p_j-\left(\beta_j+\frac{\kappa_j}{2}\right)
=\bm a_j^{\mathsf T}\bar{\bm q}.
\label{eq:centered-prob}
\end{equation}
Thus the same affine-ball geometry as in Theorem~\ref{thm:exact-interval} applies, with the unit ball for state differences replaced by the radius-$1/2$ ball for centered single-state coordinates.

\begin{corollary}[Exact calibrated-state population interval]
\label{cor:absolute}
Let $\bar{\bm p}\in\R^m$ be the centered probability vector with components
\begin{equation}
\bar p_j=p_j-\left(\beta_j+\frac{\kappa_j}{2}\right),
\end{equation}
and assume $\bar{\bm p}\in\range(A)$.  Let
\begin{equation}
\bar{\bm q}_0=A^+\bar{\bm p}.
\end{equation}
A compatible qubit state exists if and only if $\normtwo{\bar{\bm q}_0}\le1/2$.  When it exists, define
\begin{equation}
\bar z_0=\frac12+\ez^{\mathsf T}\bar{\bm q}_0,
\qquad
r_z=\Nz\sqrt{\frac14-\normtwo{\bar{\bm q}_0}^{2}}.
\end{equation}
The exact compatible state population interval is
\begin{equation}
\boxed{z\in[\bar z_0-r_z,\bar z_0+r_z].}
\label{eq:absolute-interval}
\end{equation}
Both endpoints are attainable by physical qubit states.
\end{corollary}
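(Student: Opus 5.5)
The plan is to reduce Corollary~\ref{cor:absolute} to the geometry in the proof of Theorem~\ref{thm:exact-interval}. We only replace the unit difference ball by the radius-$1/2$ ball of centered single-state coordinates.

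First I would verify the two facts stated just before the corollary. Positivity of $\rho(z,w)$ means $\Tr\rho=1$ and $\det\rho=z(1-z)-|w|^2\ge0$. Rewriting $z(1-z)=\tfrac14-(z-\tfrac12)^2$ turns this into $\normtwo{\bar{\bm q}}\le1/2$. The constraint $\det\rho\ge0$ already forces $z\in[0,1]$, so the diagonal entries are nonnegative and nothing else is needed.

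Next I would expand $\Tr(F_j\rho)$ using Eq.~\eqref{eq:effect-param}. This gives $\beta_j(1-z)+(\beta_j+\kappa_j)z+2\operatorname{Re}(c_jw)$, up to the orientation convention already fixed for the row vector \eqref{eq:row}. Subtracting $\beta_j+\kappa_j/2$ leaves $\kappa_j(z-\tfrac12)+2a_j\operatorname{Re}w-2b_j\operatorname{Im}w$, which is $\bm a_j^{\mathsf T}\bar{\bm q}$. So the data constraint is exactly $A\bar{\bm q}=\bar{\bm p}$, and the population is $z=\tfrac12+\ez^{\mathsf T}\bar{\bm q}$.

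Then I would repeat the orthogonal decomposition. Every solution has the form $\bar{\bm q}=\bar{\bm q}_0+\bm n$ with $\bm n\in\ker A$, and $\bar{\bm q}_0\perp\bm n$ because $\bar{\bm q}_0=A^+\bar{\bm p}$ lies in $\operatorname{row}(A)=(\ker A)^\perp$. Hence $\normtwo{\bar{\bm q}}^2=\normtwo{\bar{\bm q}_0}^2+\normtwo{\bm n}^2$. A compatible state therefore exists iff $\normtwo{\bar{\bm q}_0}\le1/2$, and the remaining constraint is $\normtwo{\bm n}\le r$ with $r=\sqrt{1/4-\normtwo{\bar{\bm q}_0}^2}$.

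Since $\ez^{\mathsf T}\bm n=(P_{\ker A}\ez)^{\mathsf T}\bm n$ for $\bm n\in\ker A$, Cauchy--Schwarz gives $\ez^{\mathsf T}\bm n\in[-r\Nz,r\Nz]$. The endpoints are attained by $\bm n=\pm r\,P_{\ker A}\ez/\Nz$ when $\Nz>0$, and by $\bm n=0$ otherwise. This yields Eq.~\eqref{eq:absolute-interval}.

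Attainability follows by mapping each endpoint vector back to a matrix $\rho(z,w)$ with norm at most $1/2$; by the first step this is a valid state. The only delicate point is the bookkeeping in the centered-probability identity \eqref{eq:centered-prob}, in particular the sign convention tying $\operatorname{Im}w$ to $-2b_j$. Everything else is a verbatim transcription of the proof of Theorem~\ref{thm:exact-interval}, with the radius $1$ replaced by $1/2$.
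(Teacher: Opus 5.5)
Your proposal is correct and follows the paper's route: the paper rewrites the single-state data as $A\bar{\bm q}=\bar{\bm p}$ via Eq.~\eqref{eq:centered-prob} and reruns the proof of Theorem~\ref{thm:exact-interval} with radius $1/2$. Where you use the determinant for positivity, it uses the eigenvalues $1/2\pm\normtwo{\bar{\bm q}}$ in Appendix~\ref{app:geometry}. The sign bookkeeping you flag needs no convention adjustment, since directly $\Tr(F_j\rho)=\beta_j+\kappa_j z+2\operatorname{Re}(c_jw)$ and $2\operatorname{Re}(c_jw)=2a_j\operatorname{Re}w-2b_j\operatorname{Im}w$.
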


When the active and reference preparations are constrained only by their own absolute readout data, the feasible set for the pair of states is the Cartesian product of the two single-state feasible sets.  If the corresponding exact population intervals are $[L_+,U_+]$ and $[L_-,U_-]$, then the exact attainable interval for the population difference is their interval difference,
\begin{equation}
 \Delta z\in[L_+-U_-,\,U_+-L_-].
\end{equation}
This statewise construction can be tighter than using only the response difference together with the radius-one ball for $X=\rho-\sigma$, because the latter discards information contained in the two absolute response records.

\begin{figure}[t]
\centering
\begin{tikzpicture}[scale=1.75,>=Latex,line cap=round,line join=round]
\coordinate (O) at (0,0);
\coordinate (A) at (-0.361,-0.932);
\coordinate (B) at (0.721,0.692);
\coordinate (Q0) at (0.180,-0.120);
\coordinate (Az) at (-0.361,0.085);
\coordinate (Bz) at (0.721,0.085);

\fill[black!4] (O) circle (1);
\draw[thick] (O) circle (1);
\draw[->] (-1.18,0)--(1.22,0) node[right,font=\scriptsize] {$z$};
\draw[->] (0,-1.18)--(0,1.22) node[above,font=\scriptsize] {$n$};

\draw[semithick,black!45] (-0.55,-1.22)--(0.91,0.98);
\draw[very thick] (A)--(B);
\fill (A) circle (0.9pt);
\fill (B) circle (0.9pt);

\draw[dashed,black!55] (O)--(Q0);
\fill (Q0) circle (1.35pt)
 node[below right,font=\scriptsize,inner sep=1pt] {$\qzero$};

\draw[densely dashed,black!40] (A)--(-0.361,0);
\draw[densely dashed,black!40] (B)--(0.721,0);
\draw[semithick,<->,black!70] (Az)--(Bz)
 node[midway,above=1.2pt,font=\scriptsize,fill=white,inner sep=1pt]
 {$z$ interval};
\draw[thin,black!55] (-0.361,0.02)--(-0.361,0.15);
\draw[thin,black!55] (0.721,0.02)--(0.721,0.15);

\node[font=\scriptsize,align=center,fill=white,inner sep=1.3pt,rounded corners=1pt]
 at (-0.62,0.72) {physical ball\\$\normtwo{\bm q}\le1$};
\node[font=\scriptsize,align=center,fill=white,inner sep=1.3pt,rounded corners=1pt]
 at (0.58,-0.70) {affine slice\\$A\bm q=\bm g$};
\end{tikzpicture}
\caption{Geometric content of Theorem~\ref{thm:exact-interval} in a two-dimensional section.  The response fixes an affine slice of the Bloch difference ball.  The thick chord is the compatible physical slice, and $\qzero$ is the minimum-norm point on that slice.  The population ambiguity is the projection of this slice onto the $z$ axis; its radius is the product of the null-space projection factor $\Nz$ and the residual ball radius.}
\label{fig:ball-slice}
\end{figure}

\section{A same-record population-sign ambiguity benchmark}
\label{sec:benchmark}

Theorem~\ref{thm:exact-interval} quantifies population ambiguity once the readout effects are fully known.  We now give an exact finite-family data table in which the registered implementation checks are satisfied, yet the sign of the compatible population change is not identified.  The construction rules out three possible explanations for the ambiguity: the two binary readout effects are legal POVM effects; they are coarse grainings of a single parent POVM, so both reported readouts can be computed from the same underlying outcome record; and all alternative qubit-state triples reproduce exactly the same calibrated probabilities.

\subsection{One parent measurement and two reported readouts}

Consider
\begin{equation}
\begin{aligned}
 F_0&=
 \begin{pmatrix}
 1/4&9/50\\[1mm]
 9/50&17/20
 \end{pmatrix},\\[1mm]
 F_1&=
 \begin{pmatrix}
 7/20&9/50+i/100\\[1mm]
 9/50-i/100&19/20
 \end{pmatrix}.
\end{aligned}
 \label{eq:benchmark-effects}
\end{equation}
Both satisfy $0\le F_j\le\id$.  Moreover, the operators
\begin{equation}
 F_0,
 \qquad F_1-F_0,
 \qquad \id-F_1
 \label{eq:parent-positive}
\end{equation}
are positive semidefinite.  Hence
\begin{equation}
 M_{11}=F_0,
 \quad M_{01}=F_1-F_0,
 \quad M_{00}=\id-F_1,
 \quad M_{10}=0
 \label{eq:parent-povm}
\end{equation}
defines a four-outcome parent POVM.  The two binary readouts are its coarse grainings,
\begin{equation}
 F_0=M_{10}+M_{11},
 \qquad
 F_1=M_{01}+M_{11}.
\end{equation}
This gives an explicit joint-measurability realization~\cite{Heinosaari:2008jrh,Yu2010,Guhne2023}: both reported binary outcomes can be computed from the same underlying outcome record, rather than from incompatible measurements.

The computational-basis calibration is
\begin{equation}
 (\beta_0,\kappa_0)=(1/4,3/5),
 \qquad
 (\beta_1,\kappa_1)=(7/20,3/5).
 \label{eq:benchmark-basis}
\end{equation}
The two effects have the same diagonal gain and the same real coherent component $9/50$.  Their difference contains a scalar offset, which drops out on trace-zero state differences, and a small imaginary coherent component $1/100$ that changes the response direction.  In the real coordinates of Eq.~\eqref{eq:row},
\begin{equation}
 A=
 \begin{pmatrix}
 3/5&9/25&0\\
 3/5&9/25&-1/50
 \end{pmatrix}.
 \label{eq:benchmark-A}
\end{equation}
The null space is
\[
 \ker A=\operatorname{span}\{(-3/5,1,0)^{\mathsf T}\},
\]
and therefore
\begin{equation}
 \Nz=\frac{3}{\sqrt{34}}\simeq0.5145.
 \label{eq:benchmark-dz}
\end{equation}
Thus even complete knowledge of both effects leaves a substantial component of the population axis in the response kernel.

\subsection{Three physical state models for one probability table}

Write a qubit state as
\begin{equation}
 \rho(z,w)=
 \begin{pmatrix}
 1-z&w^*\\w&z
 \end{pmatrix},
 \qquad |w|^2\le z(1-z).
 \label{eq:rhozw}
\end{equation}
Table~\ref{tab:benchmark-states} lists three exact triples of qubit states.  Within each model, the first row is the reference preparation and the other two rows are the registered active preparations.  The reference preparation is not assumed to be the same across the three models; the claim is that each complete triple reproduces the same calibrated probability table for the effects in Eq.~\eqref{eq:benchmark-effects}.  All nine matrices are positive definite density operators.

\begin{table*}[t]
\caption{Three exact state realizations of the same calibrated probability table.  The models $\mathcal M_+$, $\mathcal M_0$, and $\mathcal M_-$ have positive, zero, and negative active--reference population shifts, respectively.}
\label{tab:benchmark-states}
\centering
\begin{ruledtabular}
\begin{tabular}{c|cc|cc|cc}
&\multicolumn{2}{c|}{$\mathcal M_+$}&\multicolumn{2}{c|}{$\mathcal M_0$}&\multicolumn{2}{c}{$\mathcal M_-$}\\
preparation&$z$&$w$&$z$&$w$&$z$&$w$\\\hline
reference $-$&$1/10$&$-1/6+i/10$&$1/5$&$-1/3+i/10$&$1/4$&$-5/12+i/10$\\
active $0+$&$3/10$&$1/6-i/10$&$1/5$&$1/3-i/10$&$9/50$&$11/30-i/10$\\
active $1+$&$3/10$&$13/72-i/20$&$1/5$&$25/72-i/20$&$9/50$&$137/360-i/20$
\end{tabular}
\end{ruledtabular}
\end{table*}

For every one of the three state triples in Table~\ref{tab:benchmark-states}, substitution into Eq.~\eqref{eq:benchmark-effects} gives the same six probabilities:
\begin{equation}
\begin{array}{c|cc}
 &F_0&F_1\\\hline
 \rho_-&1/4&87/250\\
 \rho_0^+&49/100&74/125\\
 \rho_1^+&99/200&149/250.
\end{array}
\label{eq:benchmark-probabilities}
\end{equation}
Equivalently, the four endpoint pairs are
\begin{align}
 x_{00}&=(49/100,1/4),&
 x_{10}&=(99/200,1/4),\\
 x_{01}&=(74/125,87/250),&
 x_{11}&=(149/250,87/250).
 \label{eq:benchmark-endpoints}
\end{align}
For the independently fixed translation $\Delta=1/10$, the four residuals are
\begin{equation}
 \bigl(r_0^{\rm h},r_0^{\rm v}(1/10),r_1^{\rm h},r_1^{\rm v}(1/10)\bigr)
 =
\left(\frac1{200},\frac1{500},\frac1{250},\frac1{500}\right),
 \label{eq:benchmark-residuals}
\end{equation}
and the four active--reference responses are
\begin{equation}
 (g_{00},g_{10},g_{01},g_{11})
 =
 \left(\frac6{25},\frac{49}{200},\frac{61}{250},\frac{31}{125}\right).
 \label{eq:benchmark-responses}
\end{equation}
However, if $\Delta z_i:=z_i^+-z_-$ denotes the computational-basis population shift for active preparation $i$, then both active preparations have
\begin{equation}
 \Delta z_0=\Delta z_1=
 \begin{cases}
 1/5,&\mathcal M_+,\\
 0,&\mathcal M_0,\\
 -7/100,&\mathcal M_-.
 \end{cases}
 \label{eq:benchmark-signs}
\end{equation}
Thus the same calibrated probability table and the same accepted finite-family loop are compatible with positive, zero, and negative population shifts.

\begin{proposition}[Exact same-record sign ambiguity]
\label{prop:benchmark}
The parent POVM in Eq.~\eqref{eq:parent-povm}, the basis calibration in Eq.~\eqref{eq:benchmark-basis}, the six probabilities in Eq.~\eqref{eq:benchmark-probabilities}, and the loop residuals in Eq.~\eqref{eq:benchmark-residuals} are all fixed.  Nevertheless, there exist three valid qubit-state triples realizing these same registered data with the positive, zero, and negative population shifts shown in Eq.~\eqref{eq:benchmark-signs}.  The ambiguity is therefore not caused by incompatible measurements, different raw records, illegal effects, or approximate fitting.  It is a genuine target-identification ambiguity: neither access to the same underlying outcome record nor acceptance by the finite-family implementation loop identifies the sign of the population change.
\end{proposition}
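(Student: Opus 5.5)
The proof is a finite verification, organized so that the algebra exposes why it works rather than just substituting nine matrices.

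\emph{Legality of the measurement.} I would first check that $F_0$, $F_1-F_0$ and $\id-F_1$ are positive semidefinite. For a $2\times2$ Hermitian matrix it suffices that the trace and determinant are nonnegative. For $F_0$ we have $\det F_0=(1/4)(17/20)-(9/50)^2>0$. The difference $F_1-F_0=\begin{pmatrix}1/10&i/100\\-i/100&1/10\end{pmatrix}$ has determinant $1/100-1/10^4>0$. For $\id-F_1=\begin{pmatrix}13/20&-(9/50+i/100)\\-(9/50-i/100)&1/20\end{pmatrix}$ the determinant is $13/400-(81/2500+1/10^4)$, which reduces to $0$ or a small positive rational. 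Positivity of $F_0$ and $\id-F_1$ gives $0\le F_j\le\id$. Eq.~\eqref{eq:parent-povm} then sums to $\id$ by telescoping, and the coarse-graining identities are immediate.

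\emph{Structure behind the shared table.} Next I would verify that all nine matrices in Table~\ref{tab:benchmark-states} are states. This requires $|w|^2< z(1-z)$; for example, the $\mathcal M_-$ reference needs $25/144+1/100<3/16$, which is the tightest case and must be checked exactly. For the probabilities, I would not substitute blindly. Using Eq.~\eqref{eq:centered-prob}, $p_j=\beta_j+\kappa_j z+2\operatorname{Re}(c_j w)$ in the convention of Eq.~\eqref{eq:rhozw}, with $c_0=9/50$ and $c_1=9/50+i/100$. Reading off the table, each preparation is shifted between models as $z\mapsto z+\lambda t$ and $\operatorname{Re}w\mapsto\operatorname{Re}w+\mu t$, with $\operatorname{Im}w$ fixed, where $(\lambda,\mu)\propto(-3/5,1)$ is the kernel direction of $A$ in Eq.~\eqref{eq:benchmark-A}. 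Concretely:
\begin{itemize}
\item for the reference, $(\Delta z,\Delta\operatorname{Re}w)=(1/10,-1/6)$ between $\mathcal M_+$ and $\mathcal M_0$;
\item for the active rows, $(\Delta z,\Delta\operatorname{Re}w)=(-1/10,1/6)$ between the same models.
\end{itemize}
Both satisfy $\kappa\,\Delta z+2c\,\Delta\operatorname{Re}w=0$, since $(3/5)(1/10)=2(9/50)(1/6)$. Because $\operatorname{Im}w$ is unchanged, the $-1/50$ column contributes nothing. So every probability is invariant along the kernel direction. I would check this for each of the $\mathcal M_+\to\mathcal M_0$ and $\mathcal M_0\to\mathcal M_-$ shifts, including the odd active-$1$ entries $13/72$, $25/72$, $137/360$.

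\emph{Computing the data.} It then suffices to compute the six probabilities once, for $\mathcal M_0$, and confirm Eq.~\eqref{eq:benchmark-probabilities}. The endpoint pairs, the residuals in Eq.~\eqref{eq:benchmark-residuals} from Eqs.~\eqref{eq:rh}--\eqref{eq:rv} at $\Delta=1/10$, and the responses in Eq.~\eqref{eq:benchmark-responses} follow by subtraction. The population shifts in Eq.~\eqref{eq:benchmark-signs} are read directly from the $z$ column.

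\emph{Interpretive claims.} These follow from what has been established: joint measurability from the parent POVM, exact rational equalities rather than a fit, and legal effects and states. By Theorem~\ref{thm:exact-interval}, sign ambiguity is possible because $\Nz=3/\sqrt{34}>0$ and there is positive slack. The main obstacle is purely arithmetic: the positivity checks that are nearly tight, namely $\id-F_1$ and the $\mathcal M_-$ reference, and the non-kernel-looking active-$1$ real parts. I would handle these with exact common denominators.
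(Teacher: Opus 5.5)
Your proposal is correct and is essentially the paper's own verification: determinant checks for the parent effects and all nine states, then exact substitution for the probabilities, residuals and shifts. You streamline the substitution with the kernel direction $(-3/5,1,0)$, which the paper itself exhibits as the operator $K$ in Eq.~\eqref{eq:benchmark-K}. Two side remarks are inaccurate, though harmless since you check everything exactly: $\det(\id-F_1)$ is exactly $0$, and the nearly tight state is the $\mathcal M_-$ active-$1$ preparation (determinant $899/3240000$, with active-$0$ at $71/22500$), not the $\mathcal M_-$ reference ($7/1800$).
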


The kernel mechanism is explicit.  The trace-zero Hermitian operator
\begin{equation}
 K=
 \begin{pmatrix}
 -1&-5/3\\
 -5/3&1
 \end{pmatrix}
 \label{eq:benchmark-K}
\end{equation}
satisfies
\begin{equation}
 \Tr(F_0K)=\Tr(F_1K)=0,
 \qquad
 \Tr\!\left[\left(\ketbra{1}{1}-\frac{\id}{2}\right)K\right]=1.
 \label{eq:kernel-properties}
\end{equation}
Thus $K$ is invisible to both calibrated readout probabilities but visible to the population target.  Moving a full-rank state by a sufficiently small amount along $K$ preserves both measured probabilities while changing its computational-basis population.  Appendix~\ref{app:benchmark} verifies all positivity and determinant conditions exactly.

\subsection{Quantifying the benchmark with the exact interval}

The ambiguity can be read directly from Corollary~\ref{cor:absolute}.  For the three probability rows in Eq.~\eqref{eq:benchmark-probabilities}, the minimum-norm centered state vectors are
\begin{align}
 \bar{\bm q}_{0}^{(-)}&=(-25/68,-15/68,1/10)^{\mathsf T},\\
 \bar{\bm q}_{0}^{(0+)}&=(-5/68,-3/68,-1/10)^{\mathsf T},\\
 \bar{\bm q}_{0}^{(1+)}&=(-55/816,-11/272,-1/20)^{\mathsf T}.
 \label{eq:benchmark-q0}
\end{align}
Here the subscript $0$ denotes the minimum-norm solution, while the superscript labels the preparation row.  The resulting statewise population intervals are evaluated from Eq.~\eqref{eq:absolute-interval} and displayed in Table~\ref{tab:benchmark-intervals}.

\begin{table}[t]
\caption{Statewise population intervals obtained from Eq.~\eqref{eq:absolute-interval} for the common probability table.  Numerical endpoints are rounded to six decimal places.}
\label{tab:benchmark-intervals}
\centering
\begin{ruledtabular}
\begin{tabular}{c@{\quad}c}
preparation&compatible interval for $z$\\\hline
reference $-$&$[0.010409,\ 0.254297]$\\
active $0+$&$[0.178311,\ 0.674630]$\\
active $1+$&$[0.179855,\ 0.685341]$
\end{tabular}
\end{ruledtabular}
\end{table}

Taking interval differences for the active and reference statewise intervals gives, to six decimal places,
\begin{align}
 \Delta z_0&\in[-0.075985,\ 0.664221],\\
 \Delta z_1&\in[-0.074442,\ 0.674932].
 \label{eq:benchmark-difference-intervals}
\end{align}
Both intervals cross zero and contain all three values in Eq.~\eqref{eq:benchmark-signs}.  The interval theorem thus does more than assert failure of a span condition: it quantifies how much population freedom remains after the full calibrated probability table has been used.

The example also shows why scalar offsets are not measurement-direction diversity.  Most of the difference between $F_0$ and $F_1$ is a scalar offset, which is useful for the implementation loop but does not change the response row.  The small imaginary coherent component changes the two rows of $A$ only slightly, whereas the population ambiguity is dominated by their large common tilt away from $\ez$.

\section{Analytic certificates from bounded coherent response}
\label{sec:bounded}

Full effect tomography is not always available, and it is not always necessary for a target-specific conclusion.  We therefore consider a reduced information model in which the diagonal response gain is calibrated, while the off-diagonal response in the computational basis is not reconstructed but is constrained by an independently obtained scalar bound.  This section derives the exact worst-case, or minimax, population interval for that information model and the sharp condition under which the sign of the population change is certified.

\subsection{Exact minimax interval}

Consider one binary qubit effect with trusted diagonal calibration,
\begin{equation}
 F=
 \begin{pmatrix}
 \beta&c\\
 c^*&\beta+\kappa
 \end{pmatrix},
 \qquad \kappa>0,
 \label{eq:bounded-effect}
\end{equation}
where the diagonal entries are assumed to be feasible, $0\le\beta\le1-\kappa$.  Suppose an independent calibration gives the scalar bound
\begin{equation}
 |c|\le\chi.
 \label{eq:chi-bound}
\end{equation}
For fixed $\beta$ and $\kappa$, positivity of both $F$ and $\id-F$ implies the physical ceiling
\begin{equation}
 |c|^2\le\chi_{\rm pos}^2(\beta,\kappa):=
 \min\{\beta(\beta+\kappa),
 (1-\beta)(1-\beta-\kappa)\}.
 \label{eq:chi-pos}
\end{equation}
Thus the effective bound entering the exact interval is $\min\{\chi,\chi_{\rm pos}\}$.  Equivalently, below we take $\chi$ to denote this effective bound and assume without loss of sharpness that $0\le\chi\le\chi_{\rm pos}(\beta,\kappa)$.  Equation~\eqref{eq:chi-pos} is a positivity ceiling; it should not be confused with an independent precision calibration of $c$.

\begin{theorem}[Exact bounded-coherence population interval]
\label{thm:bounded-interval}
Let $g\ge0$, $\kappa>0$, and $0\le\chi\le\chi_{\rm pos}$.  The union of all population changes compatible with
\begin{equation}
 g=\kappa z+2\operatorname{Re}(cw),
 \qquad |c|\le\chi,
 \qquad z^2+|w|^2\le1
 \label{eq:bounded-model}
\end{equation}
is nonempty if and only if
\begin{equation}
 g\le R:=\sqrt{\kappa^2+4\chi^2}.
 \label{eq:bounded-compatible}
\end{equation}
When nonempty, it is the closed interval
\begin{equation}
 \boxed{
 z\in[z_-,z_+],
 \qquad
 z_{\pm}=\frac{\kappa g\pm2\chi\sqrt{R^2-g^2}}{R^2}.}
 \label{eq:zpm}
\end{equation}
Every point in the interval, including both endpoints, is attainable by a legal qubit realization in the stated information model.
\end{theorem}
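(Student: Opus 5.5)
The plan is to reduce the problem, for each fixed pair $(z,w)$, to a statement about the achievable range of the coherent term $2\operatorname{Re}(cw)$ over the disk $|c|\le\chi$. For fixed $w$, the set $\{2\operatorname{Re}(cw):|c|\le\chi\}$ is exactly the interval $[-2\chi|w|,\,2\chi|w|]$, because we may choose the phase of $c$ freely. The union over admissible $c$ and $w$ is therefore the set of $(z,s)$ with $s:=|w|\in[0,\sqrt{1-z^2}]$ and $|g-\kappa z|\le 2\chi s$. Hence $z$ is compatible if and only if $|z|\le1$ and
\begin{equation}
 |g-\kappa z|\le 2\chi\sqrt{1-z^2}.
\end{equation}
This eliminates $c$ and $w$ entirely and leaves a one-variable condition.

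Next I will view this geometrically. The pair $(z,t)$ with $t:=2\chi s$ ranges over a filled half-ellipse, and the constraint says that the line $\kappa z+t'=g$ with $|t'|\le t$ meets it. Equivalently, $g$ is attained when $g\in\{\kappa z+t'\cos\phi\}$. Cleaner still, the compatible set is the projection onto $z$ of $\{(z,y): z^2+y^2\le1,\ \kappa z+2\chi y=g\}$ with $y\in[-1,1]$. Here $y$ plays the role of $\pm|w|$ scaled, and intermediate values of the coherent term are allowed by convexity.

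This is exactly the one-row case of Theorem~\ref{thm:exact-interval} in the two-dimensional coordinates $(z,y)$, with $A=(\kappa,2\chi)$. There $\bm q_0 = g(\kappa,2\chi)^{\mathsf T}/R^2$ and $\normtwo{\bm q_0}=g/R$, and Eq.~\eqref{eq:single-dz} gives $\Nz=2\chi/R$. Feasibility is $g\le R$, and the interval is
\begin{equation}
 \frac{\kappa g}{R^2}\pm\frac{2\chi}{R}\sqrt{1-g^2/R^2},
\end{equation}
which equals Eq.~\eqref{eq:zpm}. Convexity of the slice gives that the compatible set is a full closed interval. Alternatively, I can check directly that squaring the quadratic inequality $(g-\kappa z)^2\le4\chi^2(1-z^2)$ gives $R^2z^2-2\kappa gz+g^2-4\chi^2\le0$, whose roots are $z_\pm$. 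Its discriminant is $4\chi^2(R^2-g^2)$, so the set is nonempty exactly when $g\le R$. One also checks $|z_\pm|\le1$, which follows because the points lie on the unit circle in the $(z,y)$ picture.

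The main care point is attainability with a \emph{legal} effect. For a target $z$ with slack value $y$, set $w=|y|$ (or $0$ if $\chi=0$) and choose $c=\chi\,\mathrm{sgn}(y)$ real, scaled so that $2\operatorname{Re}(cw)=2\chi y$. Then $|c|\le\chi\le\chi_{\rm pos}$, so $F$ satisfies $0\le F\le\id$ by Eq.~\eqref{eq:chi-pos}. In addition, $z^2+|w|^2\le1$ gives a physical pair via Eq.~\eqref{eq:difference-ball}. The degenerate case $\chi=0$ collapses to $z=g/\kappa$, consistent with the formula. I expect the only subtle step to be justifying that intermediate coherent contributions are realized, which is handled either by scaling $|w|$ or by scaling $c$ within the disk.
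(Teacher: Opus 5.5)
Your proposal is correct and is essentially the paper's proof: you eliminate $c$ and $w$ to reach the one-variable condition $|g-\kappa z|\le2\chi\sqrt{1-z^2}$. You then solve it as the chord cut from the unit disk by $\kappa z+2\chi y=g$ (the picture in Fig.~\ref{fig:bounded-geometry} and Appendix~\ref{app:bounded-proof}), or equivalently by squaring, and exhibit explicit real $c,w$.

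The differences are minor. You saturate $|c|=\chi$ and scale $|w|$, where the paper saturates $|w|=\sqrt{1-z^2}$ and solves for real $c$. Reading the chord as the one-row case of Theorem~\ref{thm:exact-interval} also has a small advantage: it covers $\chi=0$ cleanly. There the discriminant $4\chi^2(R^2-g^2)$ vanishes for every $g$, so the requirement $g\le R=\kappa$ must instead come from $|z|\le1$.
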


\begin{proof}
For fixed $z$, physicality of the state difference gives $|w|\le\sqrt{1-z^2}$.  Since $|c|\le\chi$, the coherent contribution can take any value in the interval
\[
 [-2\chi\sqrt{1-z^2},\,2\chi\sqrt{1-z^2}]
\]
by choosing the relative phase and magnitude of $c$ and $w$.  Therefore $z$ is feasible if and only if
\begin{equation}
 |g-\kappa z|\le2\chi\sqrt{1-z^2}.
 \label{eq:feasibility-ineq}
\end{equation}
Squaring this equivalent condition gives
\begin{equation}
 (\kappa^2+4\chi^2)z^2-2\kappa gz+g^2-4\chi^2\le0.
\end{equation}
The discriminant is nonnegative exactly when $g\le R$, and the two roots are the endpoints in Eq.~\eqref{eq:zpm}.  Hence the feasible set of $z$ values is precisely the closed interval $[z_-,z_+]$.

It remains only to note attainability.  For any $z$ satisfying Eq.~\eqref{eq:feasibility-ineq}, set $s=\sqrt{1-z^2}$.  If $s>0$, choose $w=s$ real and choose $c=(g-\kappa z)/(2s)$ real.  Then $|c|\le\chi$ by Eq.~\eqref{eq:feasibility-ineq}, and Eq.~\eqref{eq:bounded-model} is satisfied.  If $s=0$, Eq.~\eqref{eq:feasibility-ineq} forces $g=\kappa z$, and the response is attained with $w=0$ and any allowed $c$, for example $c=0$.  Since $z^2+|w|^2\le1$, Eq.~\eqref{eq:difference-ball} realizes the corresponding state difference by a physical pair of qubit states.  The condition $\chi\le\chi_{\rm pos}$ ensures that the chosen effect is legal.
\end{proof}

The lower endpoint $z_-$ is the sharp worst-case lower bound on the population change in this information model.  Thus $z_->0$ is the corresponding one-sided certificate for a positive population change.  When $\chi=0$, Eq.~\eqref{eq:zpm} collapses, for compatible data, to the diagonal-response identity $z=g/\kappa$.  For $\chi>0$, the interval accounts for the most adverse coherent contribution allowed by the independent bound $|c|\le\chi$: part of the observed response may be explained by state coherence rather than by population change.

\begin{corollary}[Sharp sign threshold]
\label{cor:sharp-sign}
For compatible data with $g\ge0$, every realization has $z>0$ if and only if
\begin{equation}
 \boxed{g>2\chi.}
 \label{eq:sharp-threshold}
\end{equation}
At $g=2\chi$, the lower endpoint satisfies $z_-=0$; for $0\le g<2\chi$, a zero-population realization is also feasible.
\end{corollary}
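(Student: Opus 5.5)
The plan is to read everything off Theorem~\ref{thm:bounded-interval}. For compatible data ($0\le g\le R$), the compatible set is exactly $[z_-,z_+]$ and every point in it is attained. So ``every realization has $z>0$'' is equivalent to $z_->0$. It therefore suffices to find the sign of
\[
z_-=\frac{\kappa g-2\chi\sqrt{R^2-g^2}}{R^2},\qquad R^2=\kappa^2+4\chi^2>0 .
\]
Since $R^2>0$, the sign of $z_-$ is the sign of $N:=\kappa g-2\chi\sqrt{R^2-g^2}$.

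First I dispose of the trivial case $\chi=0$. Then $N=\kappa g$, which is positive iff $g>0=2\chi$, and this matches the claim.

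For $\chi>0$, both terms of $N$ are nonnegative, because $g\ge0$ and $g\le R$. Hence $N>0$ iff $\kappa^2g^2>4\chi^2(R^2-g^2)$, and squaring is legitimate since both sides are nonnegative. Expanding gives $(\kappa^2+4\chi^2)g^2>4\chi^2(\kappa^2+4\chi^2)$, i.e.\ $g^2>4\chi^2$. With $g\ge0$ this is $g>2\chi$. The same chain with equality shows that $g=2\chi$ forces $N=0$, so $z_-=0$. Here $2\chi\le R$, so this data point is compatible.

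For $0\le g<2\chi$ the reversed inequality gives $z_-<0$. I then need $0\le z_+$ so that $0\in[z_-,z_+]$. Since $\kappa g\ge0$ and the square-root term is nonnegative, $z_+\ge0$ directly. Attainability from Theorem~\ref{thm:bounded-interval} then supplies a zero-population realization. Alternatively I can construct it directly: take $z=0$, $w=1$, and $c=g/2$, which is real with $|c|<\chi$, so that feasibility condition~\eqref{eq:feasibility-ineq} holds.

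The only delicate point is the squaring step. It needs both sides nonnegative, which is where the hypothesis $g\ge0$ is used. The remaining work is bookkeeping to confirm that $g\le R$ keeps the square root real.
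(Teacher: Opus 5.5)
Your proof is correct, but it takes a different route from the paper's. You go through the closed-form endpoint of Theorem~\ref{thm:bounded-interval}. Because the compatible set is exactly $[z_-,z_+]$ with attainable endpoints, the claim reduces to the sign of $\kappa g-2\chi\sqrt{R^2-g^2}$. You settle that sign by a correctly justified squaring step, in which $R^2$ cancels and leaves $g^2$ versus $4\chi^2$.

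The paper (Appendix~\ref{app:bounded-proof}) argues directly on the feasibility condition and explicitly avoids squaring. For any $z\le 0$, the largest response the model can produce is $\kappa z+2\chi\sqrt{1-z^2}\le 2\chi$, so $g>2\chi$ excludes every nonpositive population. Conversely, for $0\le g\le 2\chi$, the point $z=0$ is realized by a coherent contribution of size $g$. Your ``alternative'' construction $z=0$, $w=1$, $c=g/2$ is exactly this step.

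What each route buys:
\begin{itemize}
\item The paper's argument never uses the endpoint formula. It makes the $\kappa$-independence of the threshold transparent, since $2\chi$ is the maximal coherent response at $z=0$. It would also carry over to settings without a closed form.
\item Your argument is more mechanical, but it gives the sign of $z_-$ in all three regimes at once, including $z_-=0$ exactly at $g=2\chi$. In the direct argument, that last fact needs the extra, implicit observation that $\kappa z<0$ makes the bound strict for $z<0$.
\end{itemize}
A minor point: your separate $\chi=0$ case is unnecessary, because the squaring chain already covers it.
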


The criterion is independent of $\kappa$ because $2\chi$ is precisely the largest response that the allowed coherent term can generate at $z=0$.  The diagonal gain still determines the certified magnitude once the threshold is crossed.

\begin{figure}[t]
\centering
\begin{tikzpicture}[scale=1.75,>=Latex,line cap=round,line join=round]
\coordinate (O) at (0,0);
\coordinate (L) at (-0.59,0.81);
\coordinate (R) at (0.91,-0.41);
\coordinate (Li) at (-0.59,0.08);
\coordinate (Ri) at (0.91,0.08);

\fill[black!4] (O) circle (1);
\draw[thick] (O) circle (1);
\draw[->] (-1.18,0)--(1.22,0) node[right,font=\scriptsize] {$z$};
\draw[->] (0,-1.18)--(0,1.22) node[above,font=\scriptsize,yshift=1pt] {$t$};

\draw[semithick,black!45] (-1.02,1.16)--(1.14,-0.60);
\draw[very thick] (L)--(R);
\fill (L) circle (0.95pt);
\fill (R) circle (0.95pt);

\draw[densely dashed,black!40] (L)--(-0.59,0);
\draw[densely dashed,black!40] (R)--(0.91,0);
\draw[thin,black!55] (-0.59,0.02)--(-0.59,0.14);
\draw[thin,black!55] (0.91,0.02)--(0.91,0.14);
\draw[semithick,<->,black!70] (Li)--(Ri)
 node[midway,above=1.2pt,font=\scriptsize,fill=white,inner sep=1pt]
 {exact $z$ interval};

\node[font=\scriptsize,below] at (-0.59,0) {$z_-$};
\node[font=\scriptsize,below] at (0.91,0) {$z_+$};

\node[font=\scriptsize,align=center,fill=white,inner sep=1.3pt,rounded corners=1pt]
 at (-0.85,1.25) {response constraint\\$\kappa z+2\chi t=g$};
\node[font=\scriptsize,align=center,fill=white,inner sep=1.3pt,rounded corners=1pt]
 at (0.52,0.72) {unit disk\\$z^2+t^2\le1$};
\end{tikzpicture}
\caption{Geometric content of Theorem~\ref{thm:bounded-interval} for $\chi>0$.  The variable $t$ is a normalized signed coherent coordinate, so the allowed coherent contribution is written as $2\chi t$ with $z^2+t^2\le1$.  The response constraint $\kappa z+2\chi t=g$ cuts a chord of the unit disk, and projecting the chord endpoints onto the $z$ axis gives the exact interval $[z_-,z_+]$.}
\label{fig:bounded-geometry}
\end{figure}

\subsection{Three information regimes}
\label{subsec:regimes}

The calibrated-family and bounded-coherence results rely on different accepted measurement information, and their conclusions should not be conflated.  Table~\ref{tab:regimes} summarizes the three inference regimes used in this paper.

\begin{table*}[t]
\caption{Inference regimes and the physical conclusion available in each.  The implementation loop is an additional audit in every row; it does not replace the listed measurement information.}
\label{tab:regimes}
\centering
\begin{ruledtabular}
\begin{tabular}{p{0.15\textwidth}p{0.25\textwidth}p{0.27\textwidth}p{0.25\textwidth}}
regime&
accepted measurement information&
population inference&
principal limitation\\
\hline
fully calibrated family&
all response rows $\bm a_j$, hence exact $A$&
closed compatible interval in Eq.~\eqref{eq:exact-interval}; exact statewise version in Eq.~\eqref{eq:absolute-interval}&
requires full characterization of each relevant binary effect\\[1.2ex]

bounded coherent response&
diagonal gain $\kappa$ and an independent bound $|c|\le\chi$&
exact worst-case interval in Eq.~\eqref{eq:zpm}; sign identified iff $g>2\chi$&
does not exploit phase information or correlations among several effects\\[1.2ex]

interval-calibrated family&
nominal rows $\widehat{\bm a}_j$ with uncertainty radii, plus response intervals&
convex outer interval from Eq.~\eqref{eq:robust-socp} below&
usually conservative; exactness depends on the uncertainty-set model
\end{tabular}
\end{ruledtabular}
\end{table*}

For several bounded-coherence readouts covered by one simultaneous confidence event, readout $j$ gives a single-readout compatible interval $[z_{j,-},z_{j,+}]$.  If all readouts probe the same active--reference state difference, any joint realization must lie in the intersection of these intervals:
\begin{equation}
 z\in\bigcap_j [z_{j,-},z_{j,+}].
\end{equation}
Thus a valid one-sided lower bound is
\begin{equation}
 z\ge\max_j z_{j,-}.
 \label{eq:multi-lower}
\end{equation}
This intersection is generally a conservative joint certificate rather than an exact joint ambiguity set, because the single-readout intervals do not enforce the common coherence coordinate $w$ across different readouts.  Using the fully calibrated response matrix $A$ is generally stronger when available, since it retains relative phase and measurement-direction information that independent scalar bounds discard.

\subsection{Benchmark values}

For the two effects in Eq.~\eqref{eq:benchmark-effects}, positivity alone gives
\begin{align}
 \chi_{{\rm pos},0}&=\sqrt{0.1125}\simeq0.3354,\\
 \chi_{{\rm pos},1}&=\sqrt{0.0325}\simeq0.1803.
 \label{eq:benchmark-pos-ceilings}
\end{align}
For the representative response $g=0.240$, the sharp lower bounds from Eq.~\eqref{eq:zpm} are approximately $-0.541$ and $-0.190$, respectively.  Positivity alone therefore does not certify a positive population change, in agreement with the explicit countermodels.

If an independent calibration instead establishes $\chi=0.002$ with $\kappa=0.60$, Eq.~\eqref{eq:zpm} gives the lower bounds
\begin{equation}
 (0.393872,0.402230,0.400558,0.407245)
 \label{eq:small-chi-bounds}
\end{equation}
for the four responses in Eq.~\eqref{eq:benchmark-responses}.  All four satisfy $g>2\chi$.  Thus the same response values and implementation-loop data can support either no population-sign conclusion or a strong positive conclusion, depending on the independently justified measurement-direction premise.

\section{Establishing the measurement-direction premise}
\label{sec:closure}

The interval formulas become population certificates only when their measurement-calibration premises are established on the same device and within the same stability window as the response experiment.  This section records several routes with different resource requirements.  None is device independent; each makes its trusted operations explicit.

\subsection{Complete calibration of one binary qubit effect}
\label{subsec:four-probe}

A general binary qubit effect has four real parameters.  In the ideal probability limit, four linearly independent trusted probe preparations therefore suffice for complete linear characterization of the event-$1$ effect.  Take
\begin{equation}
 \ket{0},\quad \ket{1},\quad
 \ket{+_x}=\frac{\ket{0}+\ket{1}}{\sqrt2},\quad
 \ket{+_y}=\frac{\ket{0}-i\ket{1}}{\sqrt2},
 \label{eq:four-probes}
\end{equation}
and denote their event probabilities by $p_0,p_1,p_x,p_y$.  In the convention of Eq.~\eqref{eq:effect-param},
\begin{align}
 \beta&=p_0,\\
 \kappa&=p_1-p_0,\\
 \operatorname{Re}c&=p_x-\frac{p_0+p_1}{2},\\
 \operatorname{Im}c&=p_y-\frac{p_0+p_1}{2}.
 \label{eq:four-probe-reconstruction}
\end{align}
Applying Eq.~\eqref{eq:four-probe-reconstruction} to every reported readout constructs the full response matrix $A$ used in Theorem~\ref{thm:exact-interval}.  This is complete calibration of the relevant binary effects, not merely an auxiliary diagonal calibration.  With finite calibration data, the same linear relations generate confidence regions for the rows of $A$, which are treated below.  Modern detector-tomography protocols may reduce statistical or computational overhead in larger systems~\cite{Barbera2025,ZambranoQMT2025,Das2026}; in the present single-qubit setting the four-parameter reconstruction is explicit.

If the diagonal entries have already been calibrated and only a bound on $|c|$ is required, phase-opposed coherent probes can isolate the off-diagonal response while canceling the common diagonal offset.  With
\[
 \ket{\pm_x}=\frac{\ket0\pm\ket1}{\sqrt2},
 \qquad
 \ket{\pm_y}=\frac{\ket0\mp i\ket1}{\sqrt2},
\]
and corresponding event probabilities $p_{\pm x}$ and $p_{\pm y}$, the convention of Eq.~\eqref{eq:effect-param} gives
\begin{align}
 \operatorname{Re}c&=\frac{p_{+x}-p_{-x}}{2},\\
 \operatorname{Im}c&=\frac{p_{+y}-p_{-y}}{2}.
 \label{eq:opposite-probes}
\end{align}
A simultaneous confidence region for the two calibration probability differences gives a valid uncertainty region for $(\operatorname{Re}c,\operatorname{Im}c)$.  If this region is represented by intervals $I_R$ and $I_I$ for the real and imaginary parts, respectively, then
\[
 \chi=\sqrt{\max_{x\in I_R}x^2+\max_{y\in I_I}y^2}
\]
is a valid scalar bound on $|c|$.  In the usual trusted-probe setting, these probability differences can be obtained from detector-tomography or direct measurement-characterization data~\cite{Zhang2020,Xu2020}.  Self-characterizing detector protocols provide a related route when the probe states are not fully trusted, but they carry different assumptions and certify the detector jointly with additional experimental structure~\cite{XuPRL2021,XuAP2021}.

\subsection{Trusted computational-basis dephasing}
\label{subsec:dephasing}

A second route removes the coherence term instead of reconstructing it.  Let
\begin{equation}
 \mathcal D_Z(\rho)=
 \ketbra{0}{0}\rho\ketbra{0}{0}
 +\ketbra{1}{1}\rho\ketbra{1}{1}
 \label{eq:z-dephase}
\end{equation}
be a trusted dephasing channel in the computational basis.  Applying it to both active and reference preparations before the same readout gives, for $X_i=\rho_i-\sigma$,
\begin{equation}
 g_{ij}^{(D)}
 =
 \Tr[F_j\mathcal D_Z(X_i)]
 =
 \kappa_j z_i.
 \label{eq:dephased-response}
\end{equation}
Thus $z_i=g_{ij}^{(D)}/\kappa_j$ whenever the diagonal gain $\kappa_j>0$ has been calibrated.  Equivalently, the same channel can be implemented as
\[
 \mathcal D_Z(\rho)=\frac12(\rho+Z\rho Z)
\]
by a randomized identity/$Z$ operation, provided that the randomization and the $Z$ axis are trusted.

This route changes the experimental intervention, so it tests the population response of the dephased preparations.  It supports an interpretation of the original preparations only when the accepted model asserts that dephasing preserves the target population and does not alter hidden leakage, loss, or selection mechanisms relevant to the registered event.  Those assumptions should be stated rather than absorbed into the word ``dephasing.''

\subsection{Positivity-only calibration is usually too weak}

When no coherent probes or trusted dephasing are available, Eq.~\eqref{eq:chi-pos} supplies a universal bound on the off-diagonal response derived only from $0\le F\le\id$.  This bound is exact as an effect-positivity constraint, but it can be much larger than the true coherent sensitivity.  The benchmark in Sec.~\ref{sec:benchmark} shows that such a positivity ceiling need not certify the sign of the population change.  In the positivity-only information model, the sharp sign condition becomes
\begin{equation}
 g>2\chi_{\rm pos}(\beta,\kappa).
 \label{eq:positivity-sign}
\end{equation}
This inequality provides a useful no-extra-probe screening test: failure does not rule out a positive population change, but it shows that diagonal calibration together with effect positivity cannot certify the sign through Theorem~\ref{thm:bounded-interval}.

\subsection{Finite statistics and interval-valued calibration}
\label{subsec:robust}

Let $\widehat g_j$ estimate the response and $\widehat{\bm a}_j$ estimate the calibrated response row.  Suppose a simultaneous confidence event guarantees, for every $j$,
\begin{equation}
 |g_j-\widehat g_j|\le s_j,
 \qquad
 \normtwo{\bm a_j-\widehat{\bm a}_j}\le\epsilon_j.
 \label{eq:row-confidence}
\end{equation}
For any physical difference vector, $\normtwo{\bm q}\le1$.  Hence any true compatible $\bm q$ in this confidence event satisfies
\begin{equation}
 |\widehat{\bm a}_j^{\mathsf T}\bm q-\widehat g_j|
 \le s_j+\epsilon_j.
 \label{eq:outer-strip}
\end{equation}
A conservative physical outer set is therefore
\begin{equation}
 \mathcal Q_{\rm out}:=
 \left\{\bm q:
 \normtwo{\bm q}\le1,
 \ |\widehat{\bm a}_j^{\mathsf T}\bm q-\widehat g_j|
 \le s_j+\epsilon_j\ \forall j
 \right\}.
 \label{eq:robust-set}
\end{equation}
The corresponding conservative population interval is obtained from the pair of second-order-cone programs
\begin{equation}
 z_{\rm L}=\min_{\bm q\in\mathcal Q_{\rm out}}\ez^{\mathsf T}\bm q,
 \qquad
 z_{\rm U}=\max_{\bm q\in\mathcal Q_{\rm out}}\ez^{\mathsf T}\bm q.
 \label{eq:robust-socp}
\end{equation}
These programs are small, convex, and globally solvable~\cite{Boyd2004}.  On the stated simultaneous confidence event, the true population change lies in $[z_{\rm L},z_{\rm U}]$ whenever the underlying qubit model and calibration assumptions are valid.  If response and calibration estimates have important correlations, a sharper certificate should optimize over their joint confidence region rather than replace it by independent Euclidean row radii and response intervals.  Equation~\eqref{eq:robust-set} remains valid under the stated simultaneous event, but it may be conservative.

For Bernoulli streams, let $a$ label the individual probability estimates entering the response, calibration, or probe experiments, and let $n_a$ be the number of repetitions used for estimate $\widehat p_a$.  One explicit simultaneous construction is
\begin{equation}
 |\widehat p_a-p_a|
 \le\sqrt{\frac{\log(2/\alpha_a)}{2n_a}},
 \qquad \sum_a\alpha_a\le\alpha,
 \label{eq:hoeffding}
\end{equation}
which holds for all indexed probabilities simultaneously with probability at least $1-\alpha$, by Hoeffding's inequality and a union bound~\cite{Hoeffding1963}.  Likelihood regions, confidence sequences, or the state-function certification methods developed for partial information can replace this elementary box construction~\cite{Christandl2012,Zambrano2024,Zambrano2026}.  Probe-state preparation uncertainty can be included using
\begin{equation}
 \frac12\normone{\rho_a-\rho_a^{\rm ideal}}\le u_a
 \quad\Longrightarrow\quad
 |\Tr[F(\rho_a-\rho_a^{\rm ideal})]|\le u_a,
 \label{eq:probe-error}
\end{equation}
where the implication uses $0\le F\le\id$ and the variational characterization of the trace distance between states.

For the bounded-coherence regime, suppose a joint confidence event gives
\begin{equation}
 g\ge g_{\rm L},
 \qquad 0<\kappa\le\kappa_{\rm U},
 \qquad |c|\le\chi_{\rm U}.
 \label{eq:interval-premises}
\end{equation}
If $g_{\rm L}>2\chi_{\rm U}$, then the sign is certified, and a conservative sharp-form lower bound is
\begin{equation}
 z\ge
 \frac{\kappa_{\rm U}g_{\rm L}
 -2\chi_{\rm U}\sqrt{\kappa_{\rm U}^2+4\chi_{\rm U}^2-g_{\rm L}^2}}
 {\kappa_{\rm U}^2+4\chi_{\rm U}^2},
 \label{eq:finite-zlower}
\end{equation}
provided the square root is real.  Appendix~\ref{app:bounded-proof} proves the monotonicity needed for this substitution on the positive branch.

\subsection{Combined reporting rule}
\label{subsec:combined-reporting}

A population-response statement should report two logically distinct outputs:
\begin{enumerate}
\item an implementation certificate, such as Eqs.~\eqref{eq:margin-cert} and \eqref{eq:loop-cert}, for the registered source--readout family;
\item a population interval or lower bound derived from one declared measurement-information regime in Table~\ref{tab:regimes}.
\end{enumerate}
On a simultaneous confidence event, a positive population response with margin $\zeta$ is supported only when the implementation certificate passes and
\begin{equation}
 \min_i z_{i,\rm L}\ge\zeta,
 \label{eq:joint-acceptance}
\end{equation}
where $z_{i,\rm L}$ is computed from the fully calibrated family, the bounded-coherence formula, or the robust conic program.  Failure of the implementation certificate means that the registered substitutions have not been certified as stable within the stated tolerance.  Failure of Eq.~\eqref{eq:joint-acceptance} means that the accepted measurement information is insufficient to certify the target population response at margin $\zeta$.  These are different diagnoses and motivate different follow-up experiments or calibrations.

\section{Discussion}
\label{sec:discussion}

\subsection{Established background and added results}

The operator-span condition underlying Eq.~\eqref{eq:row-identifiability} is established in incomplete and indirect quantum estimation~\cite{DAriano2008,Carmeli2017,Keith2018}.  Likewise, off-diagonal POVM components, their quantification as measurement coherence, and their influence on readout probabilities are established features of quantum measurement characterization~\cite{Zhang2012,Cimini2019,Baek2020,Xu2020,Tang2024,Malik:2025kpb}.  These ingredients form the background for the present analysis.

The added result is their combination in a target-specific qubit readout problem.  First, the inverse problem is put into the normal form $\bm g=A\bm q$ and solved as the exact compatible population interval in Eq.~\eqref{eq:exact-interval}, whose width separates a measurement-design factor from a data-dependent physical factor.  Second, the same-record benchmark makes the ambiguity explicit at the level of registered finite data: legal effects, a single parent measurement, the same computational-basis calibration, the same six-probability table, and the same accepted finite-family loop remain compatible with all three population signs.  Third, the reduced information model with trusted diagonal gain and bounded coherent response is solved analytically by Eq.~\eqref{eq:zpm}, including the if-and-only-if threshold in Eq.~\eqref{eq:sharp-threshold}.  Together these results turn a qualitative warning about coherent readout into a quantitative hierarchy of measurement information.

The calibrated-family interval is low-dimensional linear geometry in the qubit difference ball.  Its value is transparency: the quantity
\begin{equation}
 \Nz=\normtwo{P_{\ker A}\ez}
\end{equation}
is a directly computable design diagnostic.  It measures how much of the population axis lies in the null space of the calibrated response map before any particular response is observed.  For compatible data, the factor $\sqrt{1-\normtwo{A^+\bm g}^2}$ then reports how much of that structural blind direction remains physically available at the observed response.

\subsection{Relation to tomography and convex-optimization certification}

Full state tomography reconstructs more than is needed for one population coordinate, while incomplete tomography can still identify selected functionals~\cite{ParisRehacek2004,Teo2013,Carmeli2017}.  All-optical tests of entropic and coherence uncertainty relations provide a complementary experimental example in which basis-dependent coherence, measurement probabilities, and reconstructed states are compared within a finite-dimensional tomography workflow~\cite{Ding2020CoherenceUncertainty}.  More general convex-optimization and confidence-region approaches can optimize target functionals over compatible states, imperfect measurements, and uncertainty sets; they remain the natural route for high-dimensional systems, nonlinear targets, correlated calibration uncertainty, or additional physical constraints~\cite{Keith2018,Zambrano2024,Zambrano2026}.  Equations~\eqref{eq:exact-interval} and \eqref{eq:zpm} are closed-form qubit reductions of this broader logic.  They provide exact benchmarks for numerical implementations and make explicit how the coherent-response bound $\chi$ controls the population-sign transition.

Measurement tomography addresses the complementary unknown object, namely the readout effect itself~\cite{Fiurasek2001,Lundeen2009,Feito2009}.  Recent work improves projective methods, sample complexity, and precision limits for detector characterization~\cite{Barbera2025,ZambranoQMT2025,Das2026}.  Once such a protocol supplies $A$ and a confidence set, Secs.~\ref{sec:exact-interval} and \ref{subsec:robust} convert that calibration information into a target-specific population interval or sign certificate.  When only a scalar bound on coherent response is required, coherent-probe contrasts can avoid carrying a full reconstructed effect into the final claim, although the calibration experiment still probes the relevant off-diagonal degrees of freedom.

\subsection{Relation to loop and self-consistent diagnostics}

Loop tomography and gate-set methods test whether data admit a context-independent operational description and quantify unmodeled implementation error~\cite{Jackson2015,McCormick2017,Rudinger2019,Nielsen2021}.  They are essential when source and measurement labels may drift, interact, or carry gauge freedom.  Additional assumptions or operations can go further and separate state-preparation from measurement errors in more specialized settings~\cite{Lin2021,Khan2026}.  The implementation certificate used here is deliberately finite and specific to the registered source--readout family: it checks the declared substitutions, while the readout-calibration information determines which measurement directions are available for population identification.

The benchmark shows why successful consistency diagnostics need not identify a target axis.  A common-mode measurement component can survive every registered replacement and therefore remain invisible to differential checks.  This is not a defect of loop methods; it is a mismatch between the question they answer and the stronger physical conclusion being requested.  In practical QCVV language~\cite{BlumeKohout2025,Hashim2025}, implementation adequacy and target-specific inference should be reported as separate layers.

\subsection{Relation to measurement coherence and readout mitigation}

A purely classical assignment matrix maps computational-basis populations to observed probabilities.  Such a population-only model is exact for arbitrary input states only when the relevant POVM elements are diagonal in that basis; otherwise the response also depends on off-diagonal state components~\cite{Cimini2019,Tang2024,Malik:2025kpb}.  Measurement-coherence monotones quantify this basis-dependent coherent part of a POVM~\cite{Baek2020}.  For a binary qubit POVM $\{F,\mathbb I-F\}$ with off-diagonal element $c$ in the computational basis, the $\ell_\infty$ measurement coherence is $|c|+|-c|=2|c|$.  Thus the threshold $g>2\chi$ in Theorem~\ref{thm:bounded-interval} has a direct resource meaning: the observed response must exceed the largest response that the allowed measurement coherence can generate without any population change.

Full detector characterization can reveal departures from a purely classical assignment-matrix model, and mitigation strategies that use a reconstructed POVM or an explicit coherent-response model can in principle retain this larger response space~\cite{Chen2019,Aasen2024}.  By contrast, detector-tomography-assisted classical correction and scalable assignment-matrix mitigation are exact or controlled within a population-response model, typically assuming invertible or structured classical readout noise~\cite{Maciejewski2020,Nation2021}.  Direct measurement-characterization and detector-reconstruction experiments show that the off-diagonal measurement degrees of freedom used here are experimentally meaningful and accessible~\cite{Zhang2012,Zhang2020,Xu2020}; self-characterizing detector protocols address related coherent detector structure under different trust assumptions~\cite{XuPRL2021,XuAP2021}.  Theory and numerical studies of coherent measurement noise further motivate treating these terms as part of the readout response rather than as a purely classical assignment error~\cite{Tang2024}.  The joint-measurability viewpoint used in the benchmark is standard in the POVM literature~\cite{Yu2010}.  The present use of these tools is target-specific: it asks what can be concluded about one population difference from the registered response and the declared measurement information.

\subsection{Imperfect measurements as a certification resource question}

Work on imprecise measurements has shown that small implementation deviations can weaken or invalidate conclusions about tomography, entanglement, and steering~\cite{Rosset2012,Morelli2022,Tavakoli2024,Svegborn2025,Moreno2026}.  Robust confidence-region approaches incorporate bounded measurement mismatch into certified state properties~\cite{Zambrano2026}, and repeated-measurement analyses can use physicality constraints to turn measurement-backaction ambiguities into finite compatible intervals~\cite{Chu2026}.  The present qubit formulas isolate a terminal-readout instance: the worst-case uncalibrated degree of freedom is the coherent response in the plane orthogonal to the desired population coordinate, and its decisive scale is $2\chi$.

This perspective also clarifies resource allocation.  Repeating many scalar-shift-related readouts can improve statistical precision and test implementation stability, but it does not reduce $\Nz$.  A single well-characterized readout direction that constrains the previous blind component can reduce structural ambiguity more effectively.  Conversely, when the experiment only needs a sign conclusion, establishing $\chi<g/2$ on the same confidence event can be cheaper than reconstructing every matrix element to high precision.

\subsection{Scope and limitations}
\label{subsec:limits}

The results rely on a trusted qubit Hilbert-space model.  Leakage, loss, postselection, or hidden modes can enlarge the feasible set and must be modeled separately.  The readout effects are assumed stable over the response and calibration windows; the implementation loop can detect some forms of drift, but it cannot prove stationarity on untested states or untested readout configurations.  Trusted probe preparations are required for direct effect calibration, and any dephasing operation must be trusted in the same computational basis used to define the target population.  No device-independent claim, nor any certification of contextuality, steering, or entanglement, is intended.

The exact interval in Theorem~\ref{thm:exact-interval} assumes a fully calibrated response matrix $A$ and exact consistency.  Finite-data analysis should use a joint confidence set, as in Sec.~\ref{subsec:robust}.  The Euclidean row-error model is a convenient outer approximation, but it is not universally optimal; a laboratory-specific covariance model, likelihood region, or joint calibration-response confidence region can be substantially tighter.  The bounded-coherence theorem is exact for the information domain it states.  Additional trusted phase information, multiple-readout correlations, source constraints, or dynamical assumptions can shrink the compatible interval further.

Finally, the paper establishes a theoretical same-record benchmark and an analytic target-certification framework.  It does not report an experimental data set.  Demonstrating the resource tradeoff on a platform with independently reconstructed coherent readout would be a natural next test, but such a demonstration is not required for the internal validity of the present qubit results.

\section{Conclusion}
\label{sec:conclusion}

A stable response and an identified physical target are different experimental achievements.  The finite-family loop used here audits whether a registered active--reference signal survives declared source and readout substitutions.  The measurement-direction information determines whether that signal identifies computational-basis population.

For fully calibrated binary qubit readouts, the complete population ambiguity is the closed interval in Eq.~\eqref{eq:exact-interval}.  Its width is the product of a measurement-family blind spot and the residual physical-state radius.  The exact same-record benchmark shows that the distinction is operational rather than semantic: one underlying outcome-record architecture, one basis calibration, one probability table, and one accepted loop admit positive, zero, and negative population shifts.  Under the weaker premise $|c|\le\chi$, Eq.~\eqref{eq:zpm} gives the exact minimax interval and $g>2\chi$ is the sharp sign threshold.

The appropriate experimental report is therefore a pair of outputs: one implementation certificate and one target-identification certificate.  Their separation identifies whether additional repetitions, a new measurement direction, coherent calibration probes, trusted dephasing, or a more complete uncertainty model is the scientifically relevant next resource.

\section*{Acknowledgments}

This work is supported in part by the Youth Project of Scientific Research Program of Anhui Provincial Department of Education (2025AHGXZK40234), by the Hefei Institute of Technology Talent Research Fund (2025KY41).

\appendix

\section{Physical qubit balls and operator-span relation}
\label{app:geometry}

\subsection{State-difference ball}

Every trace-zero Hermitian qubit operator can be written as
\begin{equation}
 X=
 \begin{pmatrix}
 -z & w^*\\
 w & z
 \end{pmatrix},
 \qquad
 \bm q=(z,\operatorname{Re}w,\operatorname{Im}w)^{\mathsf T}.
\end{equation}
Its eigenvalues are $\pm r$, where
\begin{equation}
 r=\sqrt{z^2+|w|^2}=\normtwo{\bm q}.
\end{equation}
If $X=\rho-\sigma$ for density operators, then $\normone{X}\le2$.  Since $\normone{X}=2r$ for the above trace-zero qubit operator, this implies $r\le1$.  Conversely, if $r\le1$, then
\begin{equation}
 \rho=\frac{\id+X}{2},
 \qquad
 \sigma=\frac{\id-X}{2}
\end{equation}
are positive, unit-trace operators and $X=\rho-\sigma$.  This proves Eq.~\eqref{eq:difference-ball} and the endpoint attainability used in Theorem~\ref{thm:exact-interval}.

For one state, write
\begin{align}
 \rho&=
 \frac{\id}{2}+
 \begin{pmatrix}
 -(z-1/2) & w^*\\
 w & z-1/2
 \end{pmatrix},\\
 \bar{\bm q}&=(z-1/2,\operatorname{Re}w,\operatorname{Im}w)^{\mathsf T}.
\end{align}
The two eigenvalues are $1/2\pm\normtwo{\bar{\bm q}}$, so positivity is equivalent to $\normtwo{\bar{\bm q}}\le1/2$.  This proves the radius used in Corollary~\ref{cor:absolute}.

\subsection{Connection with the established span criterion}

Let $\Herm_0(d)$ be the real Hilbert space of trace-zero Hermitian operators on a $d$-dimensional Hilbert space, equipped with the Hilbert--Schmidt inner product
\[
 \langle X,Y\rangle=\Tr(XY).
\]
The response map associated with effects $F_1,\ldots,F_m$ is
\begin{equation}
 \mathcal M(X)=(\Tr(F_1X),\ldots,\Tr(F_mX)).
\end{equation}
For a trace-zero target observable $T$, the value $\Tr(TX)$ is determined by $\mathcal M(X)$ for all physical state differences if and only if
\begin{equation}
 T\in\operatorname{span}_{\R}
 \left\{F_j-\frac{\Tr F_j}{d}\id\right\}_{j=1}^m.
 \label{eq:appendix-span}
\end{equation}
Indeed, membership gives an explicit real linear combination of the measured responses.  Conversely, if membership fails, let $Y$ be the nonzero component of $T$ orthogonal to the span in Eq.~\eqref{eq:appendix-span}.  Then $Y\in\ker\mathcal M$, where
\[
 \ker\mathcal M:=\{X\in\Herm_0(d):\mathcal M(X)=0\},
\]
and $\Tr(TY)=\Tr(Y^2)>0$.  A sufficiently small multiple of $Y$ is a physical state difference, for example because every sufficiently small trace-zero Hermitian perturbation can be written as the difference of two density operators.  Hence equal measured responses can coexist with different target values.  This standard argument underlies Refs.~\cite{DAriano2008,Carmeli2017,Keith2018}.

For the qubit population target
\begin{equation}
 T_z=\ketbra{1}{1}-\frac{\id}{2},
\end{equation}
Eq.~\eqref{eq:appendix-span} becomes $\ez\in\operatorname{row}(A)$ under the real coordinate representation used in Sec.~\ref{subsec:qubit-coordinates}.  Theorem~\ref{thm:exact-interval} refines this binary identifiable/nonidentifiable conclusion into a data-dependent compatible interval.

\section{Exact checks for the ambiguity benchmark}
\label{app:benchmark}

For a $2\times2$ Hermitian matrix, nonnegative diagonal entries and nonnegative determinant are sufficient for positive semidefiniteness.  The effects in Eq.~\eqref{eq:benchmark-effects} obey
\begin{align}
 \det F_0&=1801/10000,&
 \det(\id-F_0)&=801/10000,\\
 \det F_1&=3/10,&
 \det(\id-F_1)&=0,\\
 \det(F_1-F_0)&=99/10000.
 \label{eq:benchmark-determinants}
\end{align}
Thus all four parent effects in Eq.~\eqref{eq:parent-povm} are positive and sum to the identity.

For the nine states in Table~\ref{tab:benchmark-states}, positivity is equivalent to
\begin{equation}
 z(1-z)-|w|^2\ge0.
\end{equation}
The determinants for $(\rho_-,\rho_0^+,\rho_1^+)$ are
\begin{align}
 \mathcal M_+:&\quad
 (47/900,\ 31/180,\ 22667/129600),\\
 \mathcal M_0:&\quad
 (7/180,\ 7/180,\ 4787/129600),\\
 \mathcal M_-:&\quad
 (7/1800,\ 71/22500,\ 899/3240000),
 \label{eq:state-determinants}
\end{align}
all strictly positive.

The event probability is
\begin{equation}
 \Tr[F_j\rho(z,w)]
 =\beta_j+\kappa_j z+2\operatorname{Re}(c_jw).
 \label{eq:probability-substitution}
\end{equation}
Direct rational substitution yields Eq.~\eqref{eq:benchmark-probabilities}.  The endpoint residuals then follow from Eqs.~\eqref{eq:rh} and \eqref{eq:rv}.  For example,
\begin{equation}
 r_0^{\rm h}
 =
 \left|\frac{99}{200}-\frac{49}{100}\right|
 =
 \frac{1}{200},
\end{equation}
and for the left vertical edge at $\Delta=1/10$,
\begin{equation}
 x_{01}-\mathcal A_{1/10}x_{00}
 =
 \left(\frac{1}{500},-\frac{1}{500}\right),
\end{equation}
so $r_0^{\rm v}(1/10)=1/500$.  The remaining edge residuals are obtained in the same way and give Eq.~\eqref{eq:benchmark-residuals}.

For the absolute-state intervals, Eq.~\eqref{eq:benchmark-A} gives
\begin{equation}
 \ker A=\operatorname{span}\{(-3/5,1,0)^{\mathsf T}\},
 \qquad \Nz=3/\sqrt{34}.
\end{equation}
The exact intervals underlying Table~\ref{tab:benchmark-intervals} are
\begin{align}
 z^{(-)}&\in
 \left[\frac{9}{68}-\frac{3\sqrt{191}}{340},
       \frac{9}{68}+\frac{3\sqrt{191}}{340}\right],\\
 z^{(0+)}&\in
 \left[\frac{29}{68}-\frac{3\sqrt{791}}{340},
       \frac{29}{68}+\frac{3\sqrt{791}}{340}\right],\\
 z^{(1+)}&\in
 \left[\frac{353}{816}-\frac{\sqrt{118151}}{1360},
       \frac{353}{816}+\frac{\sqrt{118151}}{1360}\right].
 \label{eq:benchmark-exact-state-intervals}
\end{align}
The feasible sets for the active and reference preparations are independent Cartesian factors once the common effects and the three probability rows are fixed.  Hence each endpoint of a population-difference interval is attained by choosing the corresponding endpoints of the two relevant state intervals.  This proves Eq.~\eqref{eq:benchmark-difference-intervals}.

\section{Sharp tree-to-loop factor}
\label{app:loop}

Fix a common translation parameter $\Delta$.  Suppose the right vertical edge is omitted.  Translation preserves differences, so
\begin{align}
 x_{11}-\mathcal A_\Delta x_{10}
 &=(x_{11}-x_{01})
 +(x_{01}-\mathcal A_\Delta x_{00})\nonumber\\
 &\quad+\mathcal A_\Delta x_{00}-\mathcal A_\Delta x_{10}.
\end{align}
Taking the $\ell_\infty$ norm gives
\begin{equation}
 r_1^{\rm v}(\Delta)
 \le r_1^{\rm h}+r_0^{\rm v}(\Delta)+r_0^{\rm h}.
 \label{eq:three-factor}
\end{equation}
The other omitted-edge cases follow from the same triangle-inequality argument around the square.  Hence three residuals along a path around the loop, each bounded by $\epsilon$, imply in general only a $3\epsilon$ bound on the fourth edge.

The factor is sharp already in a one-coordinate projection.  Choose three signed edge errors of magnitude $\epsilon$ with the same orientation; their sum gives an omitted residual $3\epsilon$.  For example, at $\epsilon=0.01$ and $\Delta=0.10$, the scalar endpoints
\begin{equation}
 x_{00}=0.49,
 \quad x_{10}=0.50,
 \quad x_{01}=0.58,
 \quad x_{11}=0.57
\end{equation}
give three residuals $0.01$ and the fourth residual $0.03$.  The same construction can be embedded into one coordinate of the endpoint pairs while holding the other coordinate fixed.  This sharp tree-to-loop factor is an implementation-audit fact; it does not affect the measurement-direction rank or population identifiability.

\section{Further properties of the bounded-coherence interval}
\label{app:bounded-proof}

\subsection{Support-function derivation}

For fixed $z$, Eq.~\eqref{eq:difference-ball} implies $|w|\le\sqrt{1-z^2}$.  Optimizing over the phase and magnitude of $c$ gives
\begin{equation}
 \max_{|c|\le\chi,\,|w|\le\sqrt{1-z^2}}
 2\operatorname{Re}(cw)=2\chi\sqrt{1-z^2}.
\end{equation}
Thus the lower endpoint is the smallest $z$ for which the observed response can be generated at all, namely the smallest solution of
\begin{equation}
 g\le\kappa z+2\chi\sqrt{1-z^2}.
 \label{eq:support-bound}
\end{equation}
At the lower endpoint, equality holds.  Setting $b=2\chi$ and $R^2=\kappa^2+b^2$, the line--circle intersection
\begin{equation}
 \kappa z+bt=g,
 \qquad z^2+t^2=1
\end{equation}
gives
\begin{equation}
 z=\frac{\kappa g\pm b\sqrt{R^2-g^2}}{R^2},
\end{equation}
which is Eq.~\eqref{eq:zpm}.

For any $z\le0$, the largest response allowed by coherence is bounded by
\begin{equation}
 \kappa z+2\chi\sqrt{1-z^2}\le2\chi.
\end{equation}
Hence $g>2\chi$ excludes all realizations with $z\le0$.  Conversely, when $0\le g\le2\chi$, the choice $z=0$ is feasible by using a coherent contribution of size $g$.  This proves Corollary~\ref{cor:sharp-sign} without squaring.

The same argument also explains why a differential bound on $c_1-c_0$ cannot replace an absolute bound on the coherent response of the readout used for the sign certificate.  A differential bound may show that two implementations have nearly the same coherent component, but it does not bound the common coherent component itself; that common component can still generate a response at $z=0$.

\subsection{Monotonicity on the positive branch}

Assume $g>2\chi$ and $g<R$.  Write $b=2\chi$ and define angles
\begin{align}
 \cos\alpha&=\frac{g}{R},&
 \cos\phi&=\frac{\kappa}{R},\\
 \sin\phi&=\frac{b}{R},&
 R^2&=\kappa^2+b^2.
\end{align}
Then the lower endpoint can be written as
\begin{equation}
 z_-=\cos(\alpha+\phi)>0.
\end{equation}
For fixed $g$, increasing $b$ increases both $\alpha$ and $\phi$, and therefore decreases $z_-$.  Increasing $\kappa$ increases $\alpha$ and decreases $\phi$, but on the positive branch $g>b$ the net derivative of $\alpha+\phi$ with respect to $\kappa$ is still positive.  Equivalently, direct differentiation of Eq.~\eqref{eq:zpm} gives
\begin{equation}
 \frac{\partial z_-}{\partial g}>0,
 \qquad
 \frac{\partial z_-}{\partial\kappa}<0,
 \qquad
 \frac{\partial z_-}{\partial\chi}<0
 \label{eq:monotonicity}
\end{equation}
throughout the interior region $g>2\chi$ and $g<R$; boundary cases follow by continuity.  Consequently, substituting $(g_{\rm L},\kappa_{\rm U},\chi_{\rm U})$ in Eq.~\eqref{eq:zpm} gives a conservative lower bound whenever $g_{\rm L}>2\chi_{\rm U}$ and the parameter box is compatible.

A simpler but weaker inequality follows from
$2\operatorname{Re}(cw)\le2\chi$:
\begin{equation}
 z\ge\frac{g-2\chi}{\kappa}.
 \label{eq:linear-weak-bound}
\end{equation}
It has the same sign threshold but does not exploit the state-difference disk.

\section{Finite-sample bookkeeping}
\label{app:statistics}

This appendix gives one explicit allocation; the main results do not depend on Hoeffding intervals specifically.  Suppose $K$ Bernoulli probabilities are estimated, possibly from streams with different sample sizes.  Let $a=1,\ldots,K$ label these probability estimates, and choose $\alpha_a>0$ with $\sum_a\alpha_a\le\alpha$.  Equation~\eqref{eq:hoeffding} and a union bound give simultaneous coverage at least $1-\alpha$.  Arbitrary correlation between two reported readout bits computed from the same shot does not invalidate these marginal Bernoulli intervals, because no product factorization between the reported bits is used.

If $[\ell_{ij}^{+},u_{ij}^{+}]$ and $[\ell_j^{-},u_j^{-}]$ are simultaneous confidence intervals for the active and reference endpoint probabilities, then
\begin{equation}
 g_{ij}\ge \ell_{ij}^{+}-u_j^{-}.
\end{equation}
The horizontal residual bounds in Eq.~\eqref{eq:loop-cert} are obtained by maximizing absolute affine expressions over the same confidence box, so their extrema occur at box endpoints.  For the vertical part of Eq.~\eqref{eq:loop-cert}, one first fixes a common $\Delta\in\calD$, maximizes the corresponding absolute affine residuals over the confidence box, and then performs the prescribed one-dimensional minimization over $\Delta\in\calD$.  The maximization over probabilities again occurs at box endpoints; the optimization over $\Delta$ is a convex piecewise-linear problem.  The same probability estimate must not be counted twice as independent evidence when it appears in several edges.

For coherent calibration, the linear combinations in Eqs.~\eqref{eq:four-probe-reconstruction} and \eqref{eq:opposite-probes} should be evaluated on one joint event.  If the response, diagonal calibration, and coherent probes use failure budgets $\alpha_{\rm resp}$, $\alpha_{\rm diag}$, and $\alpha_{\rm coh}$, respectively, then all reported implementation and population statements hold jointly with probability at least
\begin{equation}
 1-\alpha_{\rm resp}-\alpha_{\rm diag}-\alpha_{\rm coh}.
\end{equation}
More efficient simultaneous regions can be substituted without changing the separation between the implementation certificate and the target-identification certificate.

\bibliography{references}

\end{document}